\documentclass[a4paper,11pt]{article}
\usepackage{a4wide}
\usepackage{float}
\usepackage{graphicx}
\usepackage[colorinlistoftodos]{todonotes}
\usepackage[colorlinks=true, allcolors=purple]{hyperref}
\usepackage{color}
\usepackage{amsmath}
\usepackage{amsthm}
\usepackage{color,soul}
\usepackage[normalem]{ulem}
\newtheorem{theorem}{Theorem}[section]
\usepackage{xcolor}
\def\sech{\mathop{\rm sech}\nolimits}

\usepackage{comment}

\title{A study of variational single solitary waves governed by the conservative-extended KdV equation with applications to shallow water dispersive shocks }
\author{Saleh Baqer, Hamid Said\\
Department of Mathematics, College of Science, \\
Kuwait University, Sabah Al Salem University City, \\ P.O. Box 5969, Safat 13060, Shadadiya, Kuwait \and}
\date{}

\begin{document}

\maketitle

{\textit{This paper is dedicated to N.F. Smyth, FAustMS and SIAM NWCS member, in loving memory and for his major contributions to the field of applied mathematics of nonlinear dispersive waves in fluids and optics \cite{noelbio}}}

\begin{abstract}
 The extended KdV equation is a nonlinear dispersive wave model that is asymptotically or variationally derived from the full dispersive Euler water waves equations when gravity-capillary and higher order nonlinear effects are taken into account, under weakly nonlinear and long-wave (shallow water limit) approximations. This approximation introduces four additional terms beyond the classical KdV equation: a nonlinear term (cubic nonlinearity), two nonlinear-dispersive terms, and a fully dispersive term (fifth order dispersion). In this paper, we employ a variational approach based on averaged Lagrangians to assess the accuracy and effectiveness of  \textit{single} solitary wave solutions governed by a particular extended KdV equation characterized by energy conservation, within the framework of applications in modulation theory. Compared with solitary wave solutions that have previously been obtained through higher order asymptotics and algebraic methods, the present variational solitonic solutions are notably simpler and more readily applicable to practical problems. Recently, within the rapidly growing field of non-convex dispersive hydrodynamics, the availability of solitary wave solutions to the associated nonlinear dispersive wave models has been recognized as essential for approximating solutions for specific dispersive hydrodynamic phenomena, such as undular bores, commonly referred to as dispersive shock waves. Accordingly, a central objective of this work is to demonstrate the applicability of the derived variational solitary wave solutions to both classical and non-classical (resonant) dispersive shocks, along with the use of the concept of Whitham shocks. Theoretical predictions show excellent agreement with numerical simulations.
\end{abstract}

\textbf{Keywords:} Nonlinear waves, extended KdV equation, variational solitons, dispersive shock waves

\section{Introduction}
\label{s:intro}
Understanding the behavior of water waves—encompassing their formation, propagation, and complex interactions—stands at the heart of fluid dynamics and ocean sciences, and more broadly, constitutes a key subject in applied mathematics and physics. The motion of oceanic water waves under the influence of gravity can be reliably described using the incompressible Euler equations, provided they are supplemented with appropriate boundary conditions at the free surface (kinematic and dynamic conditions) and at the bottom boundary (impenetrable condition) \cite{whitham,kamchatnovbook,annabook}. In many situations of interest for modelling water wave propagation, the Euler model can be simplified by assuming that the displacement of the water wave is uniform in the $y-$direction. That is, the fluid elevation $u$ from the undisturbed surface is invariant with respect to the direction transverse to the $x-$direction of the wave propagation. This assumption allows the reduction of the $(3+1)$ dimensional Euler problem to a $(2+1)$ dimensional one. Additionally, assuming that the fluid is irrotational further simplifies the system of equations by allowing the introduction of a velocity potential field. Nevertheless, even under the above assumptions, the Euler system remains a highly nonlinear free surface problem for which no general closed-form solutions are available. This has motivated the development of reductive models by imposing additional physical and mathematical approximations, such as weak nonlinearity, weak dispersion, and the long-wave approximation. These approximations enable the reduction of the full Euler equations to simpler and more tractable shallow water wave models. A particularly important shallow water wave model—and a central focus of the present work—is the Korteweg–de Vries (KdV) equation, which, after non-dimensionalization, takes the form \cite{whitham,kamchatnovbook}
\begin{equation}\label{e:kdv}
u_{t} + 6u u_{x} + u_{xxx} = 0.
\end{equation}

The KdV equation (serving as the unidirectional counterpart of the bidirectional Boussinesq system \cite{whitham,kamchatnovbook}) is of particular interest not only for its physical relevance but also for its rich mathematical structure. It is a completely integrable system, admits exact solutions through the inverse scattering transform method, and possesses an infinite hierarchy of conservation laws \cite{kamchatnovbook}. Furthermore, its applicability extends far beyond shallow water wave theory, arising in a broad range of physically significant contexts, including nonlinear optics \cite{kdvoptics2}, solid mechanics \cite{kdvmechanics1}, Bose–Einstein condensates (BECs) \cite{kdvbec1}, and geophysics \cite{kdvgeophysics}. Among the most prominent nonlinear dispersive wave phenomena observed in these diverse physical settings, as well as in shallow water flows, are solitary waves (commonly termed solitons in the study of integrable systems) and dispersive shock waves (often called undular bores in the context of fluid dynamics), whose dynamics are effectively described by the KdV equation and KdV-type models.

In contrast to classical linear waves, which exist as oscillatory motions, a solitary wave exists as a localized, non-oscillatory hump that propagates over long distances while preserving its shape and speed. Solitary waves were first observed by J.S. Russell in the 1800s along the Edinburgh and Glasgow Union Canal, and their relation to classical wave theory has remained a subject of active research for more than a century. The first detailed mathematical descriptions emerged in the latter half of the nineteenth century, when Boussinesq and Rayleigh independently developed theories for solitary wave evolution. Their analyses confirmed that a solitary wave solution take the form of a $\sech^{2}$ profile vanishing as $|x|\to{\infty}$, and it is as a special case of a cnoidal wave (expressed in terms of the Jacobi elliptic function cn) in the infinite-wavelength, or zero-wavenumber, limit \cite{whitham,kamchatnovbook}. Moreover, the persistence of solitary waves arises from the strong balance between linear (dispersion) and nonlinear effects. The appearance of solitary waves is not confined to hydrodynamics; they appear ubiquitously in diverse physical settings, including optical fibers \cite{agrawal}, liquid crystals \cite{nematicons}, atmospheric flows \cite{solitonatmosp}, and biological systems \cite{biology1}.

A dispersive shock wave (DSW), in contrast, exists as a non-stationary (continuously expanding in space over time) modulated nonlinear periodic wavetrain that connects two distinct initial physical flow states \cite{kamchatnovbook,elreview}. DSWs arise in highly dispersive media where viscosity is negligible or absent, and they can be regarded as the dispersive analogue of classical gas dynamic shocks, in which wave breaking and singularity formation are resolved through the generation of a DSW rather than a steady front as in a classical shock flow. A DSW exhibits a genuinely multi-scale structure, with slowly varying parameters such as the amplitude $a$, wavenumber $k$, frequency $\omega$, and mean height $\bar{u}$, while its wave phase $\theta$ evolves on a fast scale. The leading edge of a DSW is characterized by a solitary wave, whereas its trailing edge consists of a train of small amplitude (linear) harmonic waves. These edges propagate at markedly different velocities, reflecting the strongly nonlinear nature of DSWs; see Figure \ref{f:dsw_regimes}. DSWs were first observed as tsunamis and tidal bores in coastal regions with strong topography, with the Severn River in the UK providing a prominent example, where large waves travel upstream for distances exceeding 20 miles. DSWs also emerge in diverse physical contexts, including quantum fluids \cite{bose1,dswdroplet}, plasmas \cite{plasma3}, magma flows \cite{magmaimplosion}, colloidal media \cite{colloid}, solid mechanics \cite{karimasolidbore1,purohit}, nonlinear optics \cite{fleischer1,fatome}, nonlinear dynamical lattices \cite{patdiscdsw}, atmospheric phenomena (e.g., morning glory clouds) \cite{morning2}, and many others. The evolution, dynamics and solutions of such waves can be extensively analyzed within the mathematical framework of Whitham modulations \cite{whitham,kamchatnovbook,elreview}, which will be discussed in detail in Section \ref{s:app}.
%%%%%%%%%%%%%%%%%%%%%%%%%%%%%%%%%%%%%

Despite the effectiveness of the KdV equation in modelling a variety of physical waves, it has recently been shown to be inadequate for describing several important dispersive hydrodynamic features observed in nature, laboratory experiments, and numerical simulations. Examples include undular bores in solid waveguides induced by fracture \cite{karimasolidbore1}, internal waves in stratified fluids \cite{karimafifth}, surface-tension–dominated water waves in the regime where the Bond number is close to $1/3$ \cite{patkawahara}, the non-monotonic dependence of solitary wave speed on amplitude and the table-top shape of large amplitude solitary waves \cite{solitontable}, and the propagation of linearly polarized optical beams in defocusing (azo-dye doped) nematic liquid crystals as the optical power varies from the low (nonlocal regime) to high (local regime) \cite{salehnem2}. In all such cases, generic nonlinear dispersive wave solutions, such as solitons and dispersive shocks, shed resonant radiation that propagates ahead of the underlying wave. This resonance occurs when the velocity of the leading solitary wave edge matches the phase velocity of the forward-propagating resonant wavetrain. Capturing this non-classical effect requires going one order beyond the classical KdV weakly nonlinear, long-wave approximation and incorporating the next order nonlinear and dispersive terms. This yields the extended KdV (eKdV) equation \cite{ekdv}
\begin{equation}\label{e:ekdv}
 u_{t} + 6 uu_{x} + u_{xxx}
 + \epsilon \left( c_{1}u^{2}u_{x} + c_{2}u_{x}u_{xx} + c_{3} uu_{xxx} + c_{4} u_{xxxxx} \right) = 0,
\end{equation}
where $\epsilon$ measures weak nonlinearity and the coefficients $c_{i}\,(i=1,2,3,4)$ depend on the physical medium through which the wave propagates. For water waves, specifically, $\epsilon$ represents the ratio of wave amplitude to the equilibrium fluid depth, with the higher order coefficients taking the values $c_{1}=-3/2$, $c_{2}=23/4$, $c_{3}=5/2$, and $c_{4}=19/40$ \cite{salehekdv}. The eKdV equation also arises in other contexts, including solitary wave interactions \cite{karimafifth,ekdvinteract}, resonant fluid flow over topography \cite{ekdv}, and ion-acoustic waves in plasmas \cite{ekdvplasma}.

The key distinction between the KdV and eKdV equations—and the fundamental reason for the appearance of resonant radiation—lies in their linear dispersion relations. For the eKdV model, the linear dispersion relation (on background $\bar{u}$) is given by
\begin{equation}\label{e:omegaekdv}
    \omega(k;\bar{u})= (6\bar{u}+\epsilon c_{1}\bar{u}^{2})k-\left(1+\epsilon c_{3}\bar{u}\right)k^3 + \epsilon c_{4} k^5, 
\end{equation}
which can exhibit non-convexity. By contrast, the linear dispersion relation of the classical KdV equation, equation (\ref{e:omegaekdv}) with $c_{i}=0\,(i=1,2,3,4)$, is either purely convex or purely concave. In the non-convex dispersion case, there necessarily exists a turning point $k_{i}$ at which, or in its neighborhood, the qualitative behavior of solitary waves and dispersive shocks markedly changes. Specifically, long- and short-wave components can interact, thereby giving rise to the resonant character of the evolution \cite{patkawahara,nemboreel}.

In essence, resonant radiation causes a decay in both the amplitude and velocity of localized solitary waves and of DSWs. However, its impact is far more pronounced on DSWs than on solitons. Resonant radiation can transform a well-ranked (fully stable) DSW into an ill-ranked (unstable) one, or, in more extreme cases, destroy the entire DSW structure and replace the bore with, for example, a negative-polarity solitary wave \cite{pat}. These transitions give rise to new, non-classical dispersive hydrodynamic phenomena, termed Cross-over Dispersive Shock Waves (CDSWs) and travelling Dispersive Shock Waves (TDSWs). Further details on such non-classical regimes and their mathematical analyses can be found in \cite{patkawahara,pat,patjump,salehnem1,salehekdv,wwp}. It is worth noting that when the higher order coefficients are set to the shallow water wave values, the associated resonant radiation becomes significantly suppressed and rendered nearly unobservable \cite{resekdv, salehekdv}.

As discussed above, solitary wave solutions of the higher order KdV equation (\ref{e:ekdv}) shed resonant radiation, leading to a decay of the underlying wave. This effect is primarily driven by the higher order contributions, most notably the fifth-order derivative term $u_{xxxxx}$ \cite{resekdv,salehekdv}. In this case, the travelling wave profile departs from the classical $\sech^{2}$ structure by developing a small-amplitude Fourier component superimposed on it, which can be systematically analyzed using exponential asymptotic techniques \cite{resekdv,exp1,exp2,exp3}. Recently, however, there has been increasing interest and demand in obtaining solitary wave approximations that remain close to a pure (single) $\sech^{2}$ profile for the higher order KdV equation (\ref{e:ekdv}). Such approximations are particularly valuable in applications involving Whitham modulation theory and in the analysis of a wide class of DSWs. For example, Sprenger \cite{patkawahara} numerically constructed a $\sech^{2}$ solution to investigate the structure of the TDSW regime. Similar approximations have been used to estimate the amplitude (as well as the velocity in case there exists a velocity-amplitude relation) of classical DSWs \cite{equalamp} and to derive solutions for the CDSW regime \cite{salehekdv}. Earlier, Marchant \cite{tim_soliton} employed higher order asymptotic techniques and a nonlocal transformation to obtain approximate solitary wave solutions of the eKdV equation (\ref{e:ekdv}). More recently, Karczewska \textit{et al.} \cite{annabook,annapaper} and Khusnutdinova \textit{et al.} \cite{karimafifth} applied algebraic approaches to construct \emph{single} solitary wave solutions, with further applications to internal wave dynamics. In addition, Khusnutdinova \cite{karimagardner} most recently employed an asymptotic Kodama–Fokas–Liu near-identity transformation to map the eKdV equation to the Gardner equation (equation (\ref{e:ekdv}) with $c_{2}=c_{3}=c_{4}=0$), whose soliton solutions are well established and can thereby be utilized to approximate solitary waves of the eKdV model.

In this manuscript, motivated by the work of \cite{annapaper,karimafifth,equalamp,salehekdv}, we construct, to the best of our knowledge for the first time, a \textit{single} solitary wave solution of the eKdV equation (\ref{e:ekdv}) using the methods of the calculus of variations under the condition $c_{2}=2c_{3}$. This special case is of particular interest, as the eKdV equation (\ref{e:ekdv}) then admits an exact energy conservation law and possesses a corresponding Lagrangian and Hamiltonian structure \cite{annabook,annapaper,karimafifth}. We shall refer to this case as the conservative-extended KdV (conservative-eKdV) model. The derived variational solitary wave solution is notably simpler and more amenable to applications in modulation theory than previously obtained approximations \cite{annabook,annapaper,karimafifth,tim_soliton}. Our variational approach is based on the method of averaged Lagrangians, which has been similarly applied in related contexts such as the nonlinear optics of nematic equations \cite{lagnem1} and BECs \cite{carr}; here, however, we establish its validity through a rigorous variational derivation before applying it. Finally, we demonstrate the utility and effectiveness of the resulting solitary wave solution by applying it to two classes of dispersive hydrodynamic problems: (i) classical DSWs, where the bore is a stable, well-ordered modulated periodic wavetrain without implosive behavior in its interior waves \cite{magmaimplosion}, and (ii) non-classical DSWs, specifically the CDSW regime, in which the leading solitary wave edge of the bore emits resonant radiation and the interior waves of the bore are resonant.

The remainder of this paper is organized as follows. Section \ref{s:model} introduces the mathematical model of interest and formulates the associated Lagrangian and Hamiltonian. In Section \ref{s:varthr}, we apply rigorous variational theory to justify and demonstrate the method used to derive variational single solitary wave solutions. Section \ref{sec:var} employs the method of averaged Lagrangians, developed in the previous section, to construct approximate solitary wave solutions. Section \ref{s:app}  is devoted to applications in both classical and state-of-the-art problems in dispersive hydrodynamics. Finally, Section \ref{s:conc} summarizes our main findings, presents our conclusions, and outlines directions for future research.

\section{Model and the Lagrangian-Hamiltonian formulation}
\label{s:model}

In this paper, we focus on the version of the eKdV equation \eqref{e:ekdv} that admits exact energy conservation laws and possesses both a Lagrangian and a Hamiltonian formulation. As discussed in \cite{annabook,annapaper,karimafifth,salehekdv}, conservation of energy, and as a consequence a Hamiltonian structure, exists only in the special case $c_{2}=2c_{3}$. However, a connection to a Lagrangian formulation has not been made in these past works. The conservative-eKdV model, under the condition $c_{2}=2c_{3}$, takes the form
\begin{equation}\label{e:consekdv}
    u_{t} + 6 uu_{x} + u_{xxx}
 + \epsilon \left( c_{1}u^{2}u_{x} + 2c_{3}u_{x}u_{xx} + c_{3} uu_{xxx} + c_{4} u_{xxxxx} \right) = 0,
\end{equation}
with associated mass and energy conservation laws given by
\begin{equation} \label{e:ekdvmass}
 \frac{\partial u}{\partial t} + \frac{\partial}{\partial x} \left( 3u^{2} + u_{xx} + \epsilon \left[ \frac{1}{3} c_{1} u^{3} + c_{3}uu_{xx} + \frac{1}{2} c_{3} u_{x}^{2} + c_{4} u_{xxxx} \right] \right) = 0,
\end{equation}
and 
\begin{equation}\label{e:energycons}
 \frac{\partial}{\partial t} \left(\frac{u^{2}}{2}\right) + \frac{\partial}{\partial x} \left( 2u^{3} + uu_{xx}
 - \frac{1}{2}u_{x}^{2} + \epsilon \left[ \frac{1}{4} c_{1} u^{4} + c_{3} u^{2}u_{xx} + c_{4} uu_{xxxx} - c_{4} u_{x}u_{xxx} + \frac{1}{2} c_{4} u_{xx}^{2} \right] \right) = 0,
\end{equation}
respectively. 

Analogous to the classical KdV equation (\ref{e:kdv}), a Lagrangian for the conservative-eKdV equation \eqref{e:consekdv} can be obtained by introducing a potential function $u=\phi_{x}$ \cite{whitham,kamchatnovbook}. Substituting this representation yields the equivalent formulation
\begin{equation}\label{phi_consekdv}
\phi_{xt}+6\phi_{x}\phi_{xx}+\phi_{xxxx}+\epsilon\left(c_{1}\phi_{x}^2\phi_{xx}+ 2c_3 \phi_{xx} \phi_{xxx} + c_3\phi_x \phi_{xxxx}  +c_{4}\phi_{xxxxxx}\right)=0,
\end{equation}
with the associated Lagrangian density 
\begin{equation} \label{Lag}
    L= \frac{1}{2}\phi_{t}\phi_{x} + \phi^{3}_{x} - \frac{1}{2}\phi^2_{xx} + \epsilon\left( \frac{1}{12} c_{1} \phi^4_{x} + \dfrac{1}{2}c_3 \phi_x \phi^2_{xx}  + \dfrac{1}{2}c_3 \phi_x^2 \phi_{xxx}  + \frac{1}{2} c_{4} \phi^2_{xxx}\right).
\end{equation}
This Lagrangian satisfies the Euler–Lagrange equation
\begin{equation}
    \frac{\delta \mathcal{L}}{\delta \phi}=-\frac{\partial}{\partial t}(L_{\phi_{t}})-\frac{\partial}{\partial x}(L_{\phi_{x}})+\frac{\partial^2}{\partial x^2}(L_{\phi_{xx}})-\frac{\partial^3}{\partial x^3}(L_{\phi_{xxx}})=0,
\end{equation}
which recovers equation \eqref{phi_consekdv}. Expressed in terms of the travelling wave phase $\theta = x - V_{s} t$, where $V_{s}$ denotes the solitary wave velocity, the Lagrangian takes the form
\begin{equation} \label{consekdv_L}
    L = -\frac{1}{2}V_{s}\phi_{\theta}^2 + \phi^{3}_{\theta} - \frac{1}{2}\phi^2_{\theta \theta} + \epsilon\left(  \frac{1}{12}c_{1} \phi^4_{\theta} + \dfrac{1}{2}c_3 \phi_\theta \phi^2_{\theta \theta}  + \dfrac{1}{2}c_3 \phi_\theta^2 \phi_{\theta \theta \theta} + \frac{1}{2} c_{4} \phi^2_{\theta \theta \theta} \right).
\end{equation}

We now establish the corresponding Hamiltonian formulation to the conservative-eKdV equation \eqref{phi_consekdv}. As in the classical KdV case, the Hamiltonian can be shown to be a constant of motion. Starting from the Lagrangian \eqref{Lag}, we perform the Legendre transform 
\begin{equation} \label{leg}
    H = p \cdot \phi_t - L,
\end{equation}
where $p = \dfrac{\partial L}{\partial \phi_t}$. In terms of the potential $\phi$ and its derivatives, the Hamiltonian becomes
\begin{equation*}
    H = - \phi_x^3 + \dfrac{1}{2} \phi_{xx}^2 - \epsilon \left( \dfrac{c_1}{12} \phi_x^4 + \dfrac{c_3}{2}\phi_x \phi_{xx}^2 + \dfrac{c_3}{2} \phi_x^2 \phi_{3x}+ \dfrac{c_4}{2} \phi_{3x}^2 \right),
\end{equation*} 
and satisfies the relation
\begin{align} \label{L+H}
  \left(  \dfrac{\partial }{\partial \phi} - \dfrac{\partial }{\partial x} \left( \dfrac{\partial }{\partial \phi_x} \right) + \dfrac{\partial^2 }{\partial x^2} \left( \dfrac{\partial }{\partial \phi_{xx}} \right) - \dfrac{\partial^3 }{\partial x^3} \left( \dfrac{\partial }{\partial \phi_{xxx}} \right) \right) (L+H) = 0.
\end{align}
Since $\dfrac{\partial L}{\partial \phi_t} = \tfrac{1}{2} \phi_x$, we conclude that
\begin{equation}
    \phi_{xt} = \dfrac{\delta \mathcal{H}}{\delta \phi},
\end{equation}
which, in terms of the original variable $u$, yields an explicit Hamiltonian structure:
\begin{equation} \label{ham-eq}
    u_t = \mathcal{J} \cdot \dfrac{\delta \mathcal{H}}{\delta u},
\end{equation}
with operator $\mathcal{J} = \partial_x$ and Hamiltonian functional $\mathcal{H}$ defined by
\begin{equation} \label{H-u}
      \mathcal{H}(u) =  \int_{- \infty}^{\infty} H(u, u_x, u_{xx}) dx =  \int_{- \infty}^{\infty} \left( - u^3 + \dfrac{1}{2} u_{x}^2 - \epsilon \left( \dfrac{c_1}{12} u^4 + \dfrac{c_3}{2} u u_{x}^2 + \dfrac{c_3}{2} u^2 u_{x x}+ \dfrac{c_4}{2} u_{ x x}^2 \right) \right) dx.
\end{equation}
Then, it follows that
\begin{equation} \label{H-cons}
    \int_{- \infty}^{\infty} \left( - u^3 + \dfrac{1}{2} u_{x}^2 - \epsilon \left( \dfrac{c_1}{12} u^4 + \dfrac{c_3}{2} u u_{x}^2 + \dfrac{c_3}{2} u^2 u_{x x}+ \dfrac{c_4}{2} u_{ x x}^2 \right) \right) dx = \mathrm{constant} \, .
\end{equation}
This agrees with previous results, though a different approach (called \emph{near identity transformations}) was employed to produce the same result \cite{annabook,annapaper,karimafifth}. We, on the other hand, have obtained the Hamiltonian formulation naturally via the Legendre transformation \eqref{leg} without resorting to any approximation scheme\footnote{In fact the term $-\frac{1}{2}c_{3} u^2 u_{xx}$ in the Hamiltonian \eqref{H-u} does not appear in these past works. However, one can produce it via integration by parts. This could explain why the connection to a Lagrangian had not been made previously.}. 

\section{Variational theory}
\label{s:varthr}
Let us begin by considering a functional of the form
\begin{equation} \label{lagrangian}
    \mathcal{L}(\eta) = \int_{-a}^a L(\eta, \eta ', \theta) \, d \theta, 
\end{equation}
where the filed $ \eta $ is a function of a single variable $\theta$. Essential to the calculus of variations \cite{fomin} is the study of the extrema of functional $\mathcal{L}$ subject to boundary conditions $\eta(a) = \alpha$ and $\eta(-a) = \beta $. In other words, we look to extremize functional \eqref{lagrangian} belonging to the admissible set 
\begin{equation} \label{admissible}
    \Omega = \lbrace \eta \in C^2[-a, a]: \eta(a) = \alpha, \, \eta(-a) = \beta \rbrace.
\end{equation}
Choose a smooth function satisfying $h(a) = h(-a) = 0$, and define the function

\begin{equation} \label{l-tau}
l (\tau) =\mathcal{L} (\eta_\tau),
\end{equation}
where $\eta_\tau \doteq \eta + \tau h$ belongs to the admissible set $\Omega$. If $\eta$ extremizes the functional, then it is achieved at $\tau = 0$, which implies $l'(0) = 0$.
Now introduce the following ansatz: 

\begin{equation} \label{ansatz}
\eta = \eta(A(\theta), w(\theta), \theta),
\end{equation}
 where $A$ and $w$ are two functions (e.g. wave parameters) that need to be determined. Under the above ansatz, we can formulate the \emph{averaged Lagrangian} $\overline{\mathcal{L}}$: 
 
 \begin{equation} \label{av}
\overline{\mathcal{L}} (A,w) = \mathcal{L}\circ\eta \, (A,w) = \int_{-a}^{a} L( \eta (A,w), \eta'(A,w), \theta) \, d \theta.
\end{equation}

We similarly define $A_\tau = A + \tau \varphi$ and $w_\tau = w + \tau \psi$, where $\varphi$ and $\psi$ are two smooth functions with support over $[-a, a]$, and assume that dependence on the (first) derivatives of $A$ and $w$ may appear in the argument $\eta'(A,w)$. Here again, at $\tau=0$ we obtain the extremum $\eta(A,w)$, therefore
\begin{equation} \label{l_av}
   \overline{l}'(0)  = 0,
\end{equation}
where $\overline{l}(\tau) \doteq \overline{\mathcal{L}} (A_\tau,w_\tau)$. Now we carry forward the calculation implied by equation \eqref{l_av}:
\begin{align*}
    \overline{l}'(0) &= \int_{-a}^{a} \dfrac{d}{d\tau} \left[ L( \eta (A_\tau,w_\tau), \eta'(A_\tau,w_\tau), \theta)\right]_{\tau = 0} \, d \theta \\
    &= \int_{-a}^{a} \left[ \dfrac{\partial L}{\partial \eta} \dfrac{\partial \eta} {\partial A_\tau} \dfrac{d A_\tau}{d \tau} + \dfrac{\partial L}{\partial \eta'} \dfrac{\partial \eta'} {\partial A_\tau} \dfrac{d A_\tau}{d \tau} + \dfrac{\partial L}{\partial \eta'} \dfrac{\partial \eta'} {\partial A'_\tau} \dfrac{d A'_\tau}{d \tau} \right. \\
    &\mathrel{\phantom{=}}  \qquad \left. + \dfrac{\partial L}{\partial \eta} \dfrac{\partial \eta} {\partial w_\tau} \dfrac{d w_\tau}{d \tau}  + \dfrac{\partial L}{\partial \eta'} \dfrac{\partial \eta'} {\partial w_\tau} \dfrac{d w_\tau}{d \tau} + \dfrac{\partial L}{\partial \eta'} \dfrac{\partial \eta'} {\partial w'_\tau} \dfrac{d w'_\tau}{d \tau}\right]_{\tau = 0}d \theta \\
    &= \int_{-a}^a \left [ \dfrac{\partial L}{\partial \eta} \dfrac{\partial \eta} {\partial A} \varphi + \dfrac{\partial L}{\partial \eta'} \dfrac{\partial \eta'} {\partial A} \varphi + \dfrac{\partial L}{\partial \eta'} \dfrac{\partial \eta'} {\partial A'} \varphi' + \dfrac{\partial L}{\partial \eta} \dfrac{\partial \eta} {\partial w} \psi + \dfrac{\partial L}{\partial \eta'} \dfrac{\partial \eta'} {\partial w} \psi + \dfrac{\partial L}{\partial \eta'} \dfrac{\partial \eta'}{\partial w'} \psi' \right ] d \theta \\
    &= \int_{-a}^a \left [ \left( \dfrac{\partial L}{\partial A} - \dfrac{d}{d \theta} \left( \dfrac{\partial L}{\partial A'} \right) \right) \varphi   + \left( \dfrac{\partial L}{\partial w} - \dfrac{d}{d \theta} \left( \dfrac{\partial L}{\partial w'} \right) \right) \psi  \right ] d \theta\\
    &=0,
\end{align*}
and since this equality hold for all test functions $\varphi$ and $\psi$, we conclude 
\begin{equation}\label{Av_EL}
  \left\{
  \begin{array}{ll}
  \vspace*{0.1 in}
    \dfrac{\partial L}{\partial A} - \dfrac{d}{d \theta} \left( \dfrac{\partial L}{\partial A'} \right) =0, \\
    
    \dfrac{\partial L}{\partial w} - \dfrac{d}{d \theta} \left( \dfrac{\partial L}{\partial w'} \right) = 0.\\
    \end{array}
     \right.
\end{equation}
The above argument can be extended, in a straightforward manner, if the ansatz in \eqref{ansatz} is assumed to hold for $n$ independent parameters instead of only 
two. We have, hence, proved the following

\begin{theorem}\label{thm:1}
   Let $\eta \in \Omega$ be an extremum of functional \eqref{lagrangian}, and assume the extremum depends on a set of $n$  independent functions $\eta = \eta(z_1 (\theta), z_2(\theta), ..., z_n(\theta))$. Then $z_i$ solves the Euler-Lagrange equations:
   \begin{displaymath}
  \dfrac{\delta \overline{\mathcal{L}}}{\delta z_i} = \dfrac{\partial L}{\partial z_i} - \dfrac{d}{d \theta} \left( \dfrac{\partial L}{\partial z_i'} \right) = 0
   \end{displaymath}
for each $i = 1,2, \ldots, n$.
\end{theorem}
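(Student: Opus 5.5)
The plan is to repeat the two-parameter computation preceding the statement, now with $n$ parameters. The key observation is that once the ansatz is substituted, the averaged Lagrangian $\overline{\mathcal{L}}(z_1,\dots,z_n)$ is itself a first-order functional of the vector $z=(z_1,\dots,z_n)$:
\[
\overline{\mathcal{L}}(z)=\int_{-a}^{a}\widetilde L(z,z',\theta)\,d\theta ,
\qquad
\widetilde L(z,z',\theta)=L\bigl(\eta(z,\theta),\eta'(z,z',\theta),\theta\bigr).
\]
Here $\eta'=\sum_j \frac{\partial\eta}{\partial z_j}z_j'+\frac{\partial\eta}{\partial\theta}$ by the chain rule. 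So $\eta'$ depends on $z$ and $z'$, but not on higher derivatives. In the statement, $L$ in $\partial L/\partial z_i$ is understood as this composite $\widetilde L$.

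The steps, in order, are as follows.

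\emph{Step 1: perturb one parameter at a time.} Fix $i$ and a smooth $\varphi_i$ with $\varphi_i(\pm a)=0$. Set $z_{i,\tau}=z_i+\tau\varphi_i$ and leave the other $z_j$ unchanged. Then $\eta_\tau=\eta(z_\tau,\theta)$ satisfies $\eta_\tau-\eta=O(\tau)$, and $\eta_\tau$ stays in $\Omega$ for small $\tau$. For this, require that the ansatz preserve the boundary values, i.e.\ that $\eta(z,\pm a)$ be unchanged under compactly supported perturbations of $z$. This holds automatically, since $\varphi_i(\pm a)=0$ leaves $z(\pm a)$ fixed. Because $\eta$ extremizes $\mathcal{L}$ on $\Omega$, the function $\bar l(\tau)=\overline{\mathcal{L}}(z_\tau)=\mathcal{L}(\eta_\tau)$ has a critical point at $\tau=0$, so $\bar l'(0)=0$.

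\emph{Step 2: differentiate under the integral.} This is justified by $C^2$ regularity of $L$ and $\eta$ on the compact interval $[-a,a]$. It gives
\[
0=\int_{-a}^{a}\Bigl(\frac{\partial \widetilde L}{\partial z_i}\varphi_i+\frac{\partial \widetilde L}{\partial z_i'}\varphi_i'\Bigr)d\theta .
\]
Expanding by the chain rule gives exactly the six-term expression displayed earlier, restricted to one index. Integrate the second term by parts; the boundary term vanishes because $\varphi_i(\pm a)=0$. This yields
\[
\int_{-a}^{a}\Bigl(\frac{\partial \widetilde L}{\partial z_i}-\frac{d}{d\theta}\frac{\partial \widetilde L}{\partial z_i'}\Bigr)\varphi_i\,d\theta=0 .
\]

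\emph{Step 3: apply the fundamental lemma of the calculus of variations.} The bracket is continuous, assuming $z_i\in C^2$ so that $\frac{d}{d\theta}\partial\widetilde L/\partial z_i'$ exists. Since $\varphi_i$ is arbitrary, the fundamental lemma (as in \cite{fomin}) shows the bracket vanishes identically. Doing this for each $i=1,\dots,n$, which is possible because the $z_i$ are independent and so may be varied separately, gives the $n$ Euler--Lagrange equations.

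The main obstacle is Step 1, specifically the claim that $\tau=0$ is a critical point of $\bar l$. The hypothesis only says $\eta$ is an extremum over $\Omega$, so the argument needs the perturbed ansatz functions $\eta(z_\tau,\theta)$ to remain in $\Omega$. That is, they must be $C^2$ and satisfy the boundary conditions. I would state this explicitly as a standing assumption: the map $z\mapsto\eta(z,\cdot)$ is $C^1$ into $C^2[-a,a]$ and preserves the boundary values.

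With that assumption, $\bar l=\mathcal{L}\circ\eta\circ z_\tau$ is a composition of differentiable maps, and $\bar l'(0)=D\mathcal{L}(\eta)[\partial_\tau\eta_\tau]=0$ because $\partial_\tau\eta_\tau$ is an admissible variation. Everything else is the routine computation already carried out for $n=2$.
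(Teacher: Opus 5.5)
Your proposal is correct and takes the same route as the paper. You perturb the parameters by test functions vanishing at $\pm a$, use $\overline{l}'(0)=0$ inherited from the extremality of $\eta$ over $\Omega$, expand by the chain rule, integrate by parts and apply the fundamental lemma. The only differences are that you vary one $z_i$ at a time and make explicit what the paper leaves implicit: the integration by parts, the vanishing boundary terms, and why the perturbed ansatz remains in $\Omega$.
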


We can now generalize the previous result as follows. Consider
\begin{equation} \label{m_lagrangian}
    \mathcal{L}(\eta) = \int_{-a}^a L(\eta, \eta ', \eta '', \ldots, \eta^{(m)}, \theta) \, d \theta 
\end{equation}
and the admissible set 
\begin{equation} \label{m_admissible}
    \Omega = \lbrace \eta \in C^{\, m+1}[-a, a]: \eta(a) = \alpha_0, \, \eta(-a) = \beta_0, \ldots , \eta^{(m-1)}(a) = \alpha_{m-1}, \, \eta^{(m-1)}(-a) = \beta_{m-1} \rbrace,
\end{equation}
where the boundary pairs $(\alpha_0, \beta_0), \ldots, (\alpha_{m-1}, \beta_{m-1})$ are fixed. Then we have the following theorem: 

\begin{theorem}\label{thm:2}
  Let $\eta \in \Omega$ be an extremum of functional \eqref{m_lagrangian}, and assume the extremum depends on a set of $n$  independent functions $\eta = \eta(z_1 (\theta), z_2(\theta), ..., z_n(\theta))$. Then $z_i$ solves the Euler-Lagrange equations:
  \begin{displaymath}
   \dfrac{\delta \overline{\mathcal{L}}}{\delta z_i} = \dfrac{\partial L}{\partial z_i} - \dfrac{d}{d \theta} \left( \dfrac{\partial L}{\partial z_i'} \right) + \dfrac{d^2}{d \theta^2} \left( \dfrac{\partial L}{\partial z_i''} \right) - \ldots + (-1)^m \dfrac{d^m}{d \theta^m} \left( \dfrac{\partial L}{\partial z_i^{(m)}} \right)  = 0
  \end{displaymath}
for each $i = 1,2,\ldots, n$.
\end{theorem}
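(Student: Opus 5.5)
The argument mirrors the one used for Theorem \ref{thm:1}, with more derivatives. I would fix test functions $\varphi_1,\ldots,\varphi_n \in C_c^\infty(-a,a)$ and set $z_{i,\tau} = z_i + \tau\varphi_i$. Because each $\varphi_i$ and all its derivatives vanish near $\theta=\pm a$, the perturbed field $\eta_\tau = \eta(z_{1,\tau},\ldots,z_{n,\tau},\theta)$ has the same values of $\eta,\eta',\ldots,\eta^{(m-1)}$ at $\pm a$ as $\eta$, so it stays in the admissible set \eqref{m_admissible}. (This holds once $\eta$ depends smoothly on the $z_i$ and their derivatives.) Hence $\overline{l}(\tau) \doteq \overline{\mathcal{L}}(z_{1,\tau},\ldots,z_{n,\tau})$ has an extremum at $\tau=0$, and $\overline{l}'(0)=0$.

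Next I would write $\overline{L}(z_i,z_i',\ldots,z_i^{(m)},\theta) \doteq L(\eta,\eta',\ldots,\eta^{(m)},\theta)\circ\eta(z)$. Since $\eta^{(k)}$ is computed by the chain rule, it depends on the $z_i$ and their derivatives up to order $k$, so $\overline{L}$ involves at most $m$ derivatives of each $z_i$. Differentiating under the integral sign at $\tau=0$ gives
\begin{equation*}
\overline{l}'(0)=\int_{-a}^{a}\sum_{i=1}^{n}\sum_{j=0}^{m}\frac{\partial \overline{L}}{\partial z_i^{(j)}}\,\varphi_i^{(j)}\,d\theta .
\end{equation*}
This is just the chain-rule bookkeeping $\partial\overline{L}/\partial z_i^{(j)}=\sum_k (\partial L/\partial\eta^{(k)})(\partial\eta^{(k)}/\partial z_i^{(j)})$, done in the displayed computation preceding Theorem \ref{thm:1} for the case $m=1$.

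I would then integrate the $j$-th term by parts $j$ times. All boundary terms vanish because $\varphi_i^{(r)}(\pm a)=0$ for every $r$. This leaves $\int_{-a}^{a}\sum_i \big(\sum_j (-1)^j \frac{d^j}{d\theta^j}\frac{\partial \overline{L}}{\partial z_i^{(j)}}\big)\varphi_i\,d\theta=0$. Taking all but one $\varphi_i$ to be zero and applying the fundamental lemma of the calculus of variations gives the stated equations for each $i$. The notation $\partial L/\partial z_i^{(j)}$ in the statement is read as $\partial\overline{L}/\partial z_i^{(j)}$.

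I expect the main obstacle to be regularity. The repeated integration by parts needs $\partial\overline{L}/\partial z_i^{(j)}$ to be $C^j$. With $\eta\in C^{m+1}$ this is not automatic once the $z_i$ enter through derivatives, so I would first assume smoothness of $L$, of the ansatz map, and of the $z_i$ (say $C^{2m}$). A fallback is to state the conclusion in weak form, which needs only $\overline{l}'(0)=0$ as displayed. A secondary point is that the conclusion is a statement about $\overline{\mathcal{L}}$ restricted to the ansatz family, which is exactly what the perturbation argument uses. It should not be read as a claim that arbitrary variations of $\eta$ factor through the $z_i$.
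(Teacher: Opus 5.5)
Your proposal is correct and follows the same route as the paper's proof. You perturb each $z_i$ by $\tau\varphi_i$, use $\overline{l}'(0)=0$ together with the chain-rule identity $\sum_{j\ge k}\frac{\partial L}{\partial\eta^{(j)}}\frac{\partial\eta^{(j)}}{\partial z_i^{(k)}}=\frac{\partial L}{\partial z_i^{(k)}}$, integrate by parts, and apply the fundamental lemma. You also make explicit several points the paper leaves implicit: that $\eta_\tau$ stays in $\Omega$ when $\varphi_i\in C_c^\infty(-a,a)$, that the boundary terms vanish, and the regularity needed for the repeated integrations by parts.
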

\begin{proof}
  Let $\overline{l}(\tau) = \overline{\mathcal{L}} (\vec{z}_\tau)$, then the relation $\overline{l}'(0) = 0$ implies
  \begin{displaymath}
    \int_{-a}^a \left( \displaystyle\sum_{i=1}^n \displaystyle\sum_{k=0}^m \displaystyle\sum_{j=k}^m \dfrac{\partial L}{\partial \eta^{(j)}} \dfrac{\partial \eta^{(j)}}{\partial z_i^{(k)}} \varphi_i^{(k)}  \right) d\theta = 0,
  \end{displaymath}
  where $\varphi_1, ..., \varphi_n $ are a collection of $n$ dependent smooth functions with upport over $[-a,a]$. On the other hand, we have 
  \begin{displaymath}
    \displaystyle\sum_{j=k}^m \dfrac{\partial L}{\partial \eta^{(j)}} \dfrac{\partial \eta^{(j)}}{\partial z_i^{(k)}} =  \dfrac{\partial L}{\partial z_i^{(k)}}
  \end{displaymath}
for every $k=0, 1,\ldots, m$ and $i =1, 2,\ldots,n$. Combining the above two equations yields the the desired result. 
\end{proof}

\section{Variational solitary wave solutions}\label{sec:var}

In this section, we derive approximate solitary wave solutions using the averaged Lagrangian method described above. Specifically, we seek a traveling solitary wave solution of the form
\begin{equation} \label{soliton_Riesz_general}   u=\phi_{\theta}=a_{s}\mathrm{sech}^2(w_{s}\,\theta_{s}); \quad a_{s}\,(\text{{fixed parameter}}),
\end{equation}
where $\theta_{s}=x-V_{s}t$ is the solitary wave phase, and $w_{s}$ denotes the inverse width of the solitary wave. We assume that derivatives of all wave parameters are negligibly small and therefore can be ignored, as is standard when applying average variational principles. Here, the parameter $a_{s}$ denotes a (fixed) solitary wave amplitude, which is related to the \textit{actual (higher order)} solitary wave height $A_{s}$. The latter will be determined later using the energy method. Expressions for the wave parameters $V_{s}$ and $w_{s}$ are obtained by averaging the Lagrangian (\ref{consekdv_L}) and applying the corresponding variational procedure.

To proceed, we substitute the trial solution (\ref{soliton_Riesz_general}) into the Lagrangian (\ref{consekdv_L}) and average over the solitary wave phase, $-\infty<\theta_{s}<\infty$. This yields the averaged Lagrangian
\begin{equation}\label{averaged_L}
    \overline{\mathcal{L}} = \int^{\infty}_{-\infty} L\,d\theta = \frac{16 }{15}\frac{a^3_{s}}{w_{s}} - \frac{8}{15}w_{s}a^2_{s} - \frac{2}{3}\frac{V_{s}a^2_{s}}{w_{s}} + \epsilon \left(\frac{8}{105}\frac{a^{4}_{s}}{w_{s}}c_{1} -\dfrac{32}{105} a_{s}^3 w_s c_3 +\frac{32}{21}w_{s}^3a^2_{s}c_{4}\right) . 
\end{equation}
Taking the variation of (\ref{averaged_L}) with respect to $a_{s}$ gives
\begin{equation}\label{La}
\delta{a_{s}}:\quad \dfrac{\delta \overline{\mathcal{L}}}{\delta a_{s}}=\frac{16}{15}\frac{a^{2}_{s}}{w_{s}}-\frac{4}{3}\frac{a_{s}V_{s}}{w_{s}}-\frac{16}{15}a_{s}w_{s}+\epsilon\left(\frac{32}{105}\frac{a^{3}_{s}}{w_{s}}c_{1}  -\dfrac{32}{35} a_{s}^2 w_s c_3 +\frac{64}{21}a_{s}w^{3}_{s} c_{4}\right)=0 ,
\end{equation}
while variation with respect to $w_{s}$ yields
\begin{equation}\label{Lw}
\delta{w_{s}}:\quad \dfrac{\delta \overline{\mathcal{L}}}{\delta w_{s}}=\frac{2}{3}\frac{a^2_{s}V_{s}}{w^{2}_{s}}-\frac{16}{15}\frac{a^{3}_{s}}{w^{2}_{s}}-\frac{8}{15}a^{2}_{s}+\epsilon\left(-\frac{8}{105}\frac{a^{4}_s}{w^{2}_{s}}c_{1} -\dfrac{32}{105} a_{s}^3 c_3 + \frac{32}{7}w^{2}_{s}a^{2}_{s}c_{4}\right)= 0. 
\end{equation}
From equation (\ref{La}), an explicit expression for the solitary wave velocity $V_{s}$ follows, namely
\begin{equation}\label{Vs_1}
    V_{s}=\frac{12}{5} a_{s} - \frac{4}{5} w^2_{s} + \epsilon\left(\frac{8}{35}a^{2}_{s}c_{1} -\dfrac{24}{35} a_{s} w_s^2 c_3 +\frac{16}{7}w^{4}_{s}c_{4} \right). 
\end{equation}
Substituting this velocity expression into equation (\ref{Lw}) produces four possible roots for the inverse width $w_{s}$,
\begin{eqnarray}
    w_{s} &=& \pm \frac{1}{4}\sqrt{\frac{7 + 5a_{s}c_3\epsilon  + \sqrt{49-560a_{s}c_{4}\epsilon-80a^{2}_{s}c_{1}c_{4}\epsilon^{2} + 25a_{s}^2 c_3^2 \epsilon^2 + 70 a_{s}c_3\epsilon  }}{5\epsilon c_{4}}}, \\
    w_{s} &=& \pm \frac{1}{4}\sqrt{\frac{7 + 5a_{s}c_3\epsilon  - \sqrt{49-560a_{s}c_{4}\epsilon-80a^{2}_{s}c_{1}c_{4}\epsilon^{2} + 25a_{s}^2 c_3^2 \epsilon^2 + 70 a_{s}c_3\epsilon  }}{5\epsilon c_{4}}}, \label{eq:right_ws}
\end{eqnarray}
where the negative roots are discarded due to their incompatibility with the standard KdV limit $c_{1}=c_{3}=c_{4}=0$ \cite{whitham}. To select the physically correct width between the two remaining positive roots, we take the limit $\epsilon \to 0$, which reduces the conservative-eKdV equation (\ref{e:consekdv}) to the classical KdV equation (\ref{e:kdv}). This identifies the positive root in (\ref{eq:right_ws}) as the appropriate width, namely,
\begin{equation}
    \lim_{\epsilon\to{0}} \frac{1}{4}\sqrt{\frac{7 + 5a_{s}c_3\epsilon  - \sqrt{49-560a_{s}c_{4}\epsilon-80a^{2}_{s}c_{1}c_{4}\epsilon^{2} + 25a_{s}^2 c_3^2 \epsilon^2 + 70 a_{s}c_3\epsilon  }}{5\epsilon c_{4}}} = \sqrt{\frac{a_{s}}{2}}.
\end{equation}
Since the conservative-eKdV equation (\ref{e:consekdv}) is a correction to the classical KdV at order $\mathcal{O}(\epsilon)$, it is natural to expand the inverse width expression (\ref{eq:right_ws}) in a Taylor series up to this asymptotic order. Doing so in (\ref{eq:right_ws}) yields
\begin{equation}\label{eq:ws_per}
\boxed{
    w_{s}=\sqrt{\frac{a_{s}}{2}}+\epsilon a^{3/2}_{s}\left(\frac{\sqrt{2}}{28}c_{1} - \dfrac{5 \sqrt{2}}{28} c_3+\frac{5\sqrt{2}}{7}c_{4}\right)+\mathcal{O}(\epsilon^2),}
\end{equation}
which, when substituted into (\ref{Vs_1}), gives the corrected velocity
\begin{equation}\label{eq:vs_per}
\boxed{
    V_{s}=2a_{s}+\epsilon a^{2}_{s}\left(\frac{6}{35}c_{1} - \dfrac{2}{35} c_3-\frac{4}{7}c_{4}\right)+\mathcal{O}(\epsilon^2).}
\end{equation}
Here, $a_{s}$ remains fixed and independent of $\epsilon$, as previously stressed. At this point, we have derived variational solutions for the velocity and inverse width of a single solitary wave governed by the conservative-eKdV equation. However, the presence of higher order nonlinear terms in the conservative-eKdV equation is expected to perturb the solitonic height $A_s$ as well. To capture this, additional steps are required to determine $A_{s}$ in terms of the fixed amplitude parameter $a_s$, namely we employ the energy method, as done in \cite{whitham,patkawahara}.

\subsection{Conservative-eKdV velocity-amplitude relation and solitonic height}

To calculate the actual (higher order) solitary wave height $A_{s}$, we derive first a velocity-amplitude relation using the energy method. Following \cite{whitham,patkawahara}, let $u=u(\theta_{s})$ be a travelling wave solution. Then the general eKdV equation (\ref{e:ekdv}) can be written as
%\begin{equation}
 %   -V_{s}u'+3(u^2)'+u'''+\epsilon\left(\frac{c_1}{3}(u^3)'+\frac{c_3}{2}(u'^2)'+c_3(uu'')'+c_{4}u^{(5)}\right)=0,
%\end{equation}
\begin{equation}
    -V_{s}u'+3(u^2)'+u'''+\varepsilon\left(\frac{c_1}{3}(u^3)'+\frac{c_2}{2}(u'^2)'+c_3uu'''+c_{4}u^{(5)}\right)=0.
\end{equation}
Integration yields 
%\begin{equation}\label{prior_energy}
 %   -V_{s}u + 3(u^2) + u''+\epsilon\left(\frac{c_1}{3}(u^3) +\frac{c_3}{2}(u'^2) +c_3(uu'') +c_{4}u^{(4)}\right)=E_{1},
%\end{equation}
\begin{equation}\label{prior_energy}
    -V_{s}u+3u^2+u''+\frac{\varepsilon c_1}{3}u^3 + \frac{\varepsilon}{2} (c_2-c_3) u'^2 + c_3\varepsilon u u'' +c_4\varepsilon u^{(4)}=E_{1},
\end{equation}
where $E_{1}$ is an integration constant determined by the solitonic boundary condition 
\begin{equation}
    \lim_{\theta_{s}\to{-\infty}}u=0 \quad \implies \quad E_{1}=0.
\end{equation}
Multiplying equation (\ref{prior_energy}) by $u'$ and integrating produces the so-called energy estimate equation \cite{patkawahara}
\begin{equation}\label{energy_integral_1}
    -\frac{V_{s}}{2}u^2+u^3+\frac{1}{2}u'^2+\frac{\varepsilon c_1}{12}u^4 + \varepsilon c_4 \left(u'u''' -\frac{1}{2}u''^2\right) + \frac{\varepsilon (c_2-c_3)}{2} \int u'^3  + \varepsilon c_{3} \int u u' u''  = E_{2},
\end{equation}
with a constant $E_{2}$ that is again determined by the solitonic boundary condition 
\begin{equation}
    \lim_{\theta_{s}\to{-\infty}}u=0 \quad \implies \quad E_{2}=0.
\end{equation}
Using integration by parts, one can rewrite the integrals in (\ref{energy_integral_1}) in terms of each other, e.g.,
\begin{multline}\label{energy_integral_2}
    -\frac{V_{s}}{2}u^2+u^3+\frac{1}{2}u'^2+\frac{\varepsilon c_1}{12}u^4 + \varepsilon c_4 \left(u'u''' -\frac{1}{2}u''^2\right) \\ + \frac{\varepsilon}{2}(c_2-c_3)uu'^2  
    + \varepsilon(2c_{3}-c_{2})\int u u' u'' = 0. 
\end{multline}
In the special case $c_{2}=2c_{3}$, this reduces (\ref{energy_integral_2}) to
\begin{equation}
    -\frac{V_{s}}{2}u^2+u^3+\frac{1}{2}u'^2+\frac{\varepsilon c_1}{12}u^4 + \varepsilon c_4 \left(u'u''' -\frac{1}{2}u''^2\right) + \frac{\varepsilon}{2}c_{3}uu'^2 =0.
\end{equation}
Substituting the solitonic ansatz, with the actual solitonic  height $A_{s}$,
\begin{equation}\label{e:ansatz}
    u=A_{s}\sech^{2}(w_{s}\theta_{s}),
\end{equation}
into the above expression and evaluating at the soliton maximum $\theta_{s}=0$ gives
\begin{equation}\label{energy_max}
    -\frac{V_{s}}{2}u^2_{\text{max}}+u^3_{\text{max}}+\frac{1}{2}u'^2_{\text{max}}+\frac{\varepsilon c_1}{12}u^4_{\text{max}} + \varepsilon c_4 \left(u'_{\text{max}}u'''_{\text{max}} -\frac{1}{2}u''^2_{\text{max}}\right) + \frac{\varepsilon}{2}c_{3}u_{\text{max}}u'^2_{\text{max}}=0,
\end{equation}
where
\begin{equation}
    u^{j}_{\text{max}}=A^{j}_{s},\quad u''^2_{\text{max}}=4A^{2}_{s}w^{4}_{s}, \quad u'^2_{\text{max}}=u'_{\text{max}}u'''_{\text{max}}=u_{\text{max}}u'^2_{\text{max}}=0,
\end{equation}
with the exponent $j\in{\left\{1,2,3,4\right\}}$. Equation (\ref{energy_max}) yields four possible values for the solitonic height $A_{s}$ in the ansatz (\ref{e:ansatz}). Of these, two correspond to trivial (zero) roots, one is negative—discarded here, since we restrict our attention to positive-polarity (bright or elevation) solitary waves that are relevant for the modulation theory applications considered in this paper—and one is positive, namely
\begin{equation}
A_{s}=\frac{-6+\sqrt{36+6V_{s}c_{1}\epsilon+24c_{1}c_{4}w^{4}_{s}\epsilon^2}}{\epsilon c_{1}}. 
\end{equation}
Expanding this positive root in a Taylor series up to order $\mathcal{O}(\epsilon)$ gives
\begin{equation}
    A_{s}=\frac{V_{s}}{2}+\epsilon\left(-\frac{1}{48}c_{1}V^{2}_{s}+2c_{4}w^{4}_{s}\right). 
\end{equation}
Finally, substituting the variational solutions (\ref{eq:ws_per}) and (\ref{eq:vs_per}) allows us to express the solitonic height $A_{s}$ in terms of the fixed amplitude parameter $a_{s}$ as
\begin{equation}
\boxed{
    A_{s}=a_{s}+\epsilon a^{2}_{s}\left(\frac{1}{420}c_{1}  -\frac{1}{35} c_3+\frac{3}{14}c_{4}\right)+\mathcal{O}(\epsilon^{2}).}
    \label{e:Hs_blue}
\end{equation}
Note that when $\epsilon=0$ the parameter $a_{s}$ serves as the solitonic height for the classical KdV case. 

\section{Applications to dispersive shocks}
\label{s:app}

In this section, we present applications of the above variational solitary wave solutions to dispersive shocks. Before doing so, we briefly review the fundamental methods used to analyze DSWs. 

In general, solutions describing DSW evolutions can be derived within the framework of modulation theory \cite{whitham,kamchatnovbook,elreview}, originally formulated by G.B. Whitham in the mid-1960s. The central aim of Whitham’s theory is to provide a mathematical description of how linear and nonlinear wavetrains evolve under slow variations in space and time, while also addressing their stability under the influence modulations. Three principal approaches exist for the development of Whitham’s theory: the method of averaged conservation laws, the method of multiple scales in perturbation theory, and the method of averaged Lagrangians \cite{whitham,kamchatnovbook}. Although distinct in formulation, all of these approaches lead to equivalent conclusions; however, the averaged Lagrangian method yields an additional conservation law, namely the conservation of wave action. Through Whitham’s averaging procedure, one obtains a system of partial differential equations (PDEs) governing the slow evolution of wave parameters such as amplitude $a$, wavenumber $k$, frequency $\omega$, and mean level $\bar{u}$. These PDEs are known as modulation equations, and the complete set is referred to as the modulation system. Whitham further demonstrated that the nature of the modulation system determines wave stability: if the system is hyperbolic (with real eigenvalues and a complete set of eigenvectors), the corresponding wave is stable; if the system is elliptic (with complex eigenvalues), the wave is unstable \cite{whitham,kamchatnovbook}.

The next major breakthrough following the development of Whitham modulation theory was achieved by Pitaevskii \textit{et al} \cite{plasmapit}. They demonstrated that when a modulation system can be expressed in Riemann invariant form, its expansion fan (or simple wave) solution directly yields the structure of a DSW. Pitaevskii’s seminal work focused on DSWs governed by the classical KdV equation (\ref{e:kdv}). A subsequent advance established, using rigorous tools from functional analysis, that all integrable systems admit a Riemann invariant representation \cite{analysis}. Consequently, DSW solutions for integrable models are guaranteed to exist. This insight led to the formulation of explicit DSW solutions for a broad class of integrable nonlinear dispersive wave equations, including the defocusing nonlinear Schr\"{o}dinger (NLS) equation, among others \cite{elreview}. 

In most real-world applications, however, the governing models are non-integrable, so a Riemann-invariant formulation—and therefore a guaranteed DSW solution—cannot be assumed. This presents a central challenge: developing methods to construct DSW solutions for non-integrable nonlinear dispersive hydrodynamic models. Two major contributions addressing this challenge are (i) the DSW fitting method (introduced by El) \cite{fitting} and (ii) the DSW equal amplitude approximation (developed by Marchant and Smyth) \cite{equalamp}. El’s DSW fitting method relies on the observation that the modulation equations become degenerate at the trailing and leading edges of a DSW and adopt a standard structure that can be identified \textit{without explicit knowledge} of the modulation equations themselves. Remarkably, the only essential ingredient in this approach is the linear dispersion relation $\omega(k;\bar{u})$ associated with the hydrodynamic model under consideration. This enables the determination of key macroscopic DSW properties, including the amplitudes and velocities at its edges. However, the method is limited to stable, KdV-type DSWs. To overcome this limitation and approximate solutions for both stable and unstable DSWs, regardless of the integrability property as well, Marchant and Smyth introduced the DSW equal amplitude approximation, which plays a central role in the present work. Their approach is motivated by the observation that, in the long-time regime, the DSW envelope is typically dominated by a train of solitary waves with a sharp drop to the initial state behind the shock \cite{equalamp,elfocusing}. The method then approximates the DSW as a train of \textit{nearly equal} amplitude solitary waves, with averaging techniques applied to determine the amplitude and velocity of the leading solitary wave edge of the dispersive shock. It is important to note, however, that the DSW equal amplitude approximation does not provide predictions for the trailing edge of a DSW, and extending the work of \cite{equalamp} to address this limitation remains an open problem.

To generate a dispersive shock wave (DSW), one must begin with a rapid change in a physical property, such as an abrupt variation in the surface wave elevation profile $u$. Mathematically, this is modelled by prescribing the conservative-eKdV equation \eqref{e:consekdv} with an initial jump discontinuity,
\begin{equation}\label{e:jump}
   u(x,0)= \left\{ \begin{array}{cc}
         \Delta,~~x<0\\
         0,~~x>0, 
    \end{array}\right.
\end{equation}
\begin{figure}
    \centering
    \includegraphics[width=0.49\textwidth]{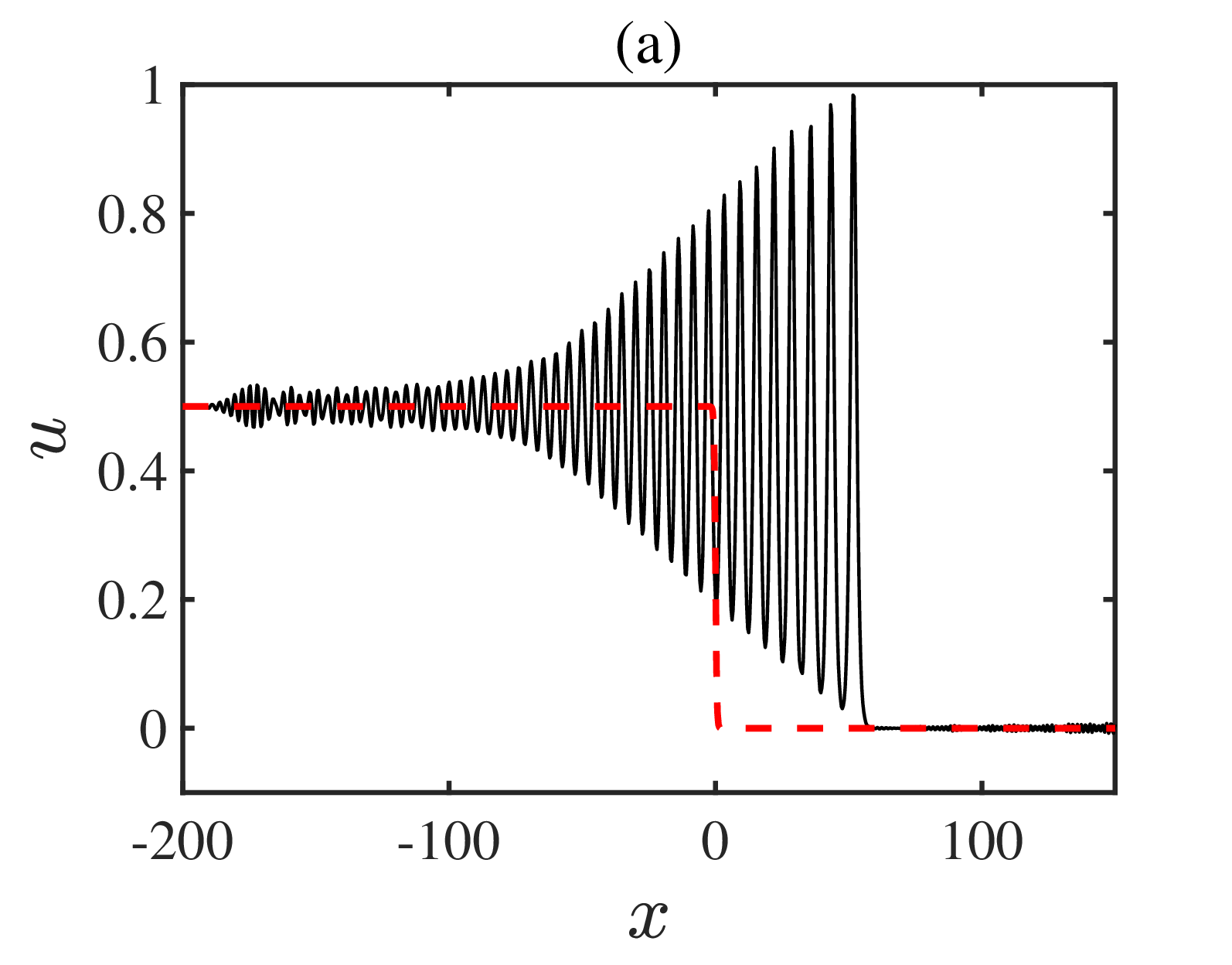}
    \includegraphics[width=0.49\textwidth]{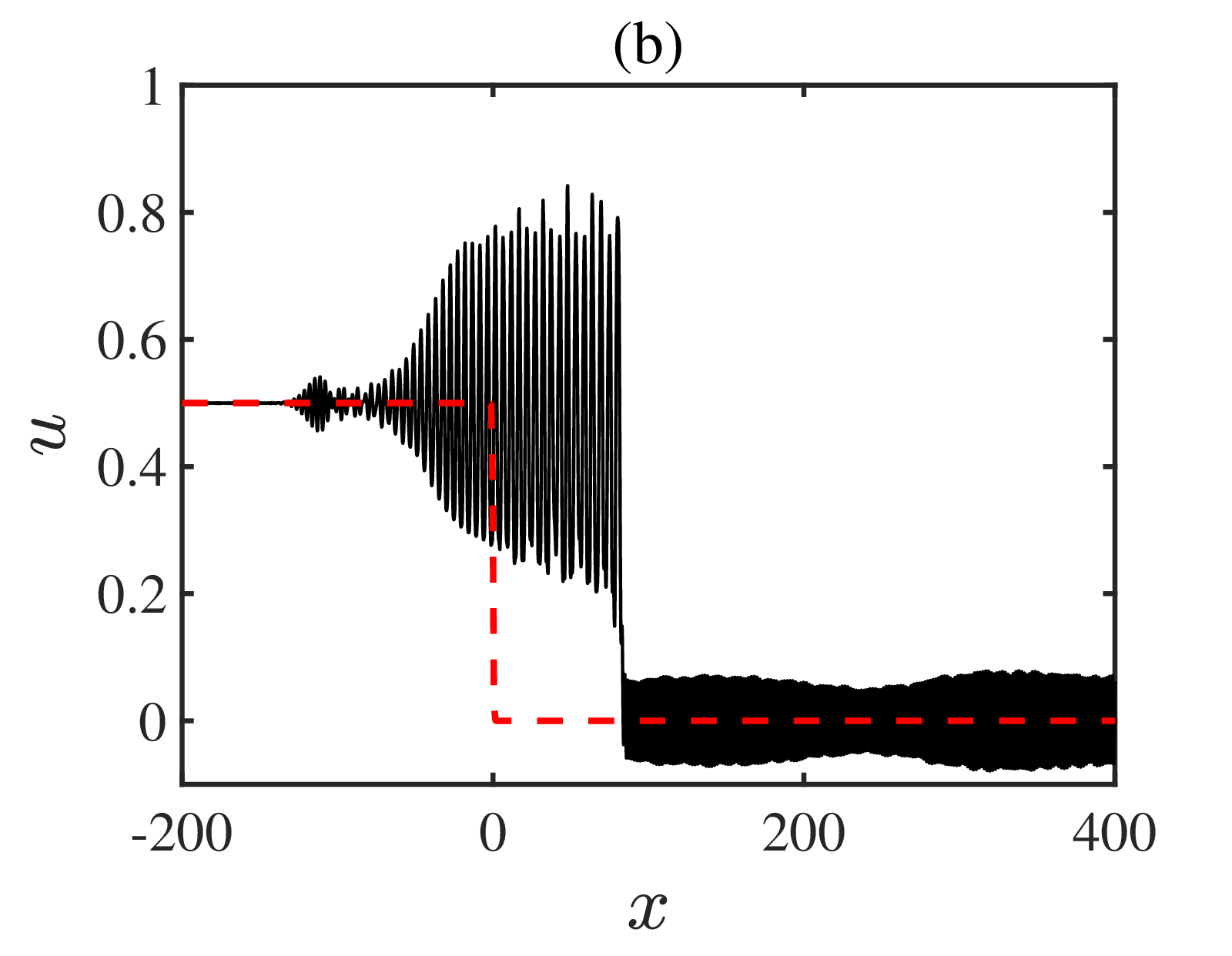}
    \caption{Dispersive hydrodynamic regimes governed by the conservative-eKdV equation (\ref{e:consekdv}). Initial jump discontinuity (\ref{e:jump}): red (dashed) line. (a) Classical form of DSW with $c_{1}=-3/2$, $c_{3}=5/2$, $c_{4}=19/40$, $t=30$, (b) non-classical form of DSW, namely, CDSW regime, with $c_{1}=0.5$, $c_{3}=0.5$, $c_{4}=0.6$, $t=50$. Here, $\epsilon=0.15$, $\Delta=0.5$. (Color version online).}
     \label{f:dsw_regimes}
\end{figure}
which represents a classical Riemann problem. Here, the parameter $\Delta$ specifies the magnitude of the jump. In this section, we study two classes of dispersive shocks, using the variational solitonic solutions (\ref{eq:ws_per}), (\ref{eq:vs_per}), and (\ref{e:Hs_blue}) derived earlier. The first is a classical DSW governed by the conservative-eKdV equation \eqref{e:consekdv}. This regime, illustrated in Figure \ref{f:dsw_regimes}(a), will be analyzed using the DSW equal amplitude approximation. The second is a non-classical (resonant) DSW governed by the conservative-eKdV equation \eqref{e:consekdv}, typically referred to as a CDSW regime. This case, depicted in Figure \ref{f:dsw_regimes}(b), will likewise be studied via the equal amplitude approximation, but in conjunction with the vital concept of Whitham shocks \cite{patjump,salehekdv}.

\subsection{Classical dispersive shock problem} 

Following the approach in \cite{equalamp}, we assume that the dispersive shock wavetrain (illustrated in Figure \ref{f:dsw_regimes}(a)) is predominantly composed of a train of $\mathcal{N}$ solitary waves of nearly uniform amplitude, distributed approximately evenly throughout the bore. To analyze this regime, we begin by integrating the conservative-eKdV mass and energy conservation laws, (\ref{e:ekdvmass}) and (\ref{e:energycons}), over the DSW domain $-\infty < x < \infty$. This procedure leads to the averaged relations
\begin{equation}
   \mathcal{N}\frac{d}{dt}\left(\overline{u}\right)= 3\Delta^{2}+\frac{1}{3}\epsilon c_{1}\Delta^{3},
   \label{e:aveq1}
\end{equation}
\begin{equation}
   \mathcal{N}\frac{d}{dt}\left(\overline{\frac{u^2}{2}}\right)= 2\Delta^{3}+\frac{1}{4}\epsilon c_{1}\Delta^{4},
   \label{e:aveq2}
\end{equation}
\begin{figure}[!ht]
    \centering
    \includegraphics[width=0.49\textwidth]{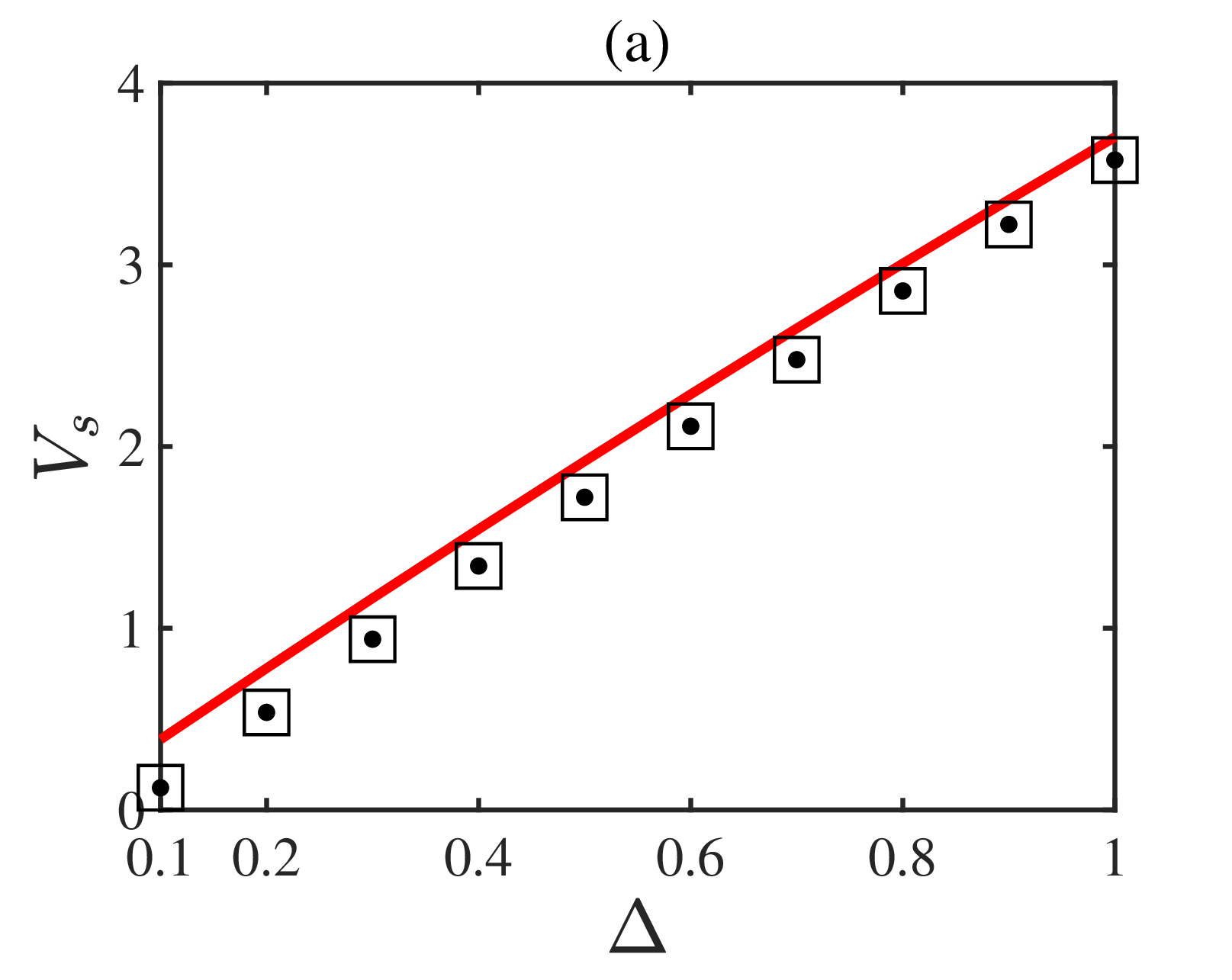}
    \includegraphics[width=0.49\textwidth]{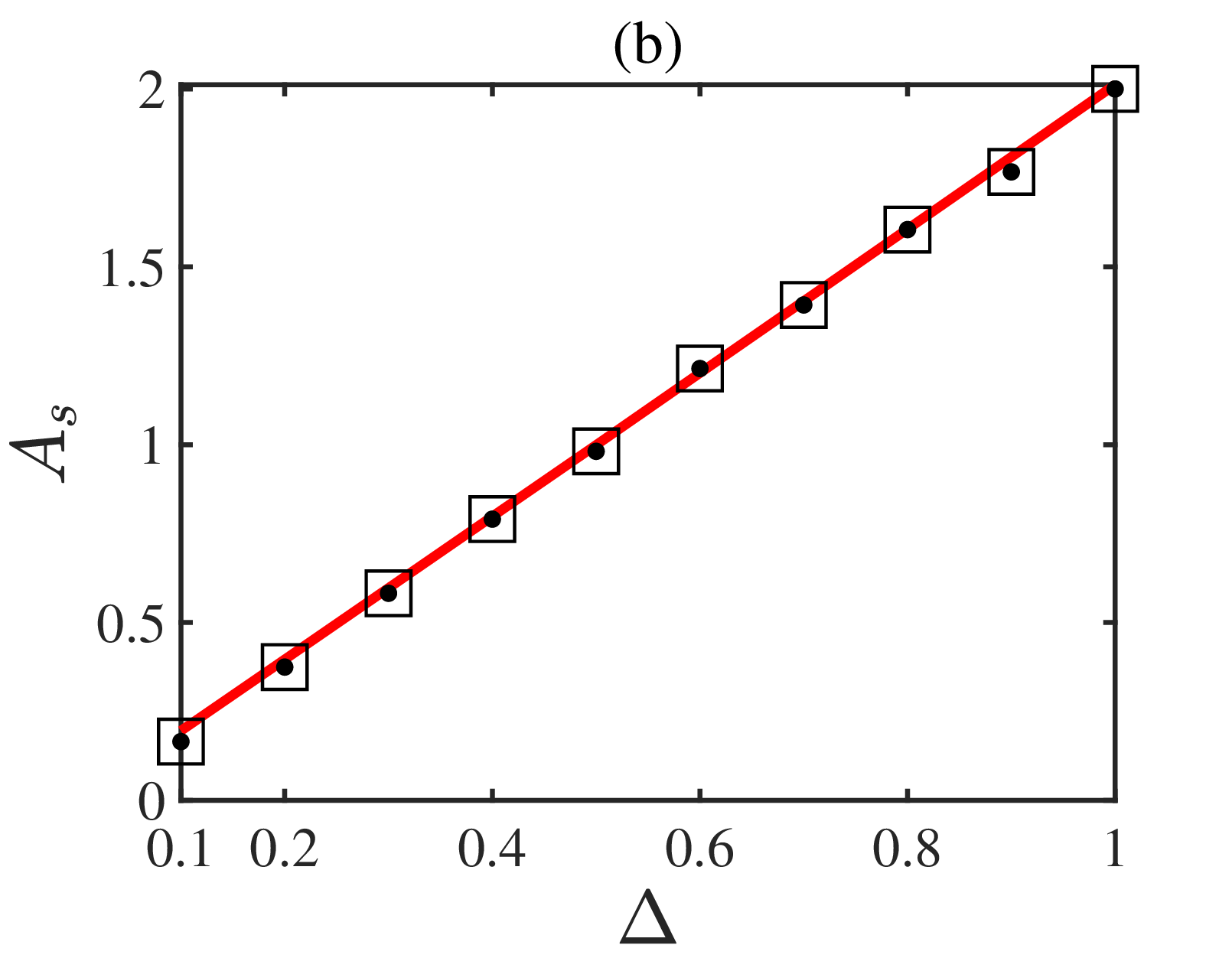}
    \caption{Comparisons between numerical solutions of the conservative-eKdV equation (\ref{e:consekdv}) and analytical solutions obtained by the DSW equal amplitude approximation method. Numerical solutions: black dot-boxes; theoretical solutions: red (solid) line. (a) lead solitary wave edge velocity of the DSW $V_{s}$, (b) lead solitary wave edge height of the DSW $A_{s}$. The higher order coefficients are chosen to correspond to the shallow water wave values: $c_{1}=-3/2$, $c_{3}=5/2$, $c_{4}=19/40$. Here, $t=30$, $\epsilon=0.15$. (Color version online).}
     \label{f:dsw_comps}
\end{figure}expected from the shock evolution at $x= \pm \infty$, which establish a connection between the averaged mass and energy densities. By substituting the variational single solitary wave solutions derived in Section \ref{sec:var}, we obtain the explicit averaged forms
\begin{equation}
    \overline{u}= \int^{\infty}_{-\infty}u\,dx=2\sqrt{2a_{s}}+\epsilon \sqrt{2a^{3}_{s}}\left(-\frac{29}{210}c_{1}+\frac{23}{35}c_{3}-\frac{17}{7}c_{4}\right),
\end{equation}
\begin{equation}
    \overline{\frac{u^2}{2}}=\int^{\infty}_{-\infty}\frac{1}{2}\overline{u^2}\,dx=\frac{2}{3}\sqrt{2a^{3}_{s}}+\epsilon \sqrt{2a^{5}_{s}}\left(-\frac{2}{45}c_{1}+\frac{1}{5}c_{3}-\frac{2}{3}c_{4}\right). 
\end{equation}
Taking the ratio of the averaged mass equation \eqref{e:aveq1} and the averaged energy equation \eqref{e:aveq2}, and integrating with respect to time under the assumption that no solitary waves are initially present in the DSW, yields a closed algebraic relation for the amplitude parameter $a_s$,
\begin{equation}
  \frac{2\sqrt{2a_{s}}+\epsilon \sqrt{2a^{3}_{s}}\left(-\dfrac{29}{210}c_{1}+\dfrac{23}{35}c_{3}-\dfrac{17}{7}c_{4}\right)}{\dfrac{2}{3}\sqrt{2a^{3}_{s}}+\epsilon \sqrt{2a^{5}_{s}}\left(-\dfrac{2}{45}c_{1}+\dfrac{1}{5}c_{3}-\dfrac{2}{3}c_{4}\right) }=\frac{3\Delta^{2}+\dfrac{1}{3}\epsilon c_{1}\Delta^{3}}{2\Delta^{3}+\dfrac{1}{4}\epsilon c_{1}\Delta^{4}},
  \label{e:algebraiceq}
\end{equation}
from which $a_s$ can be determined implicitly using numerical methods such as Newton’s method. The value of this parameter, in turn, directly provides the height $A_{s}$ of the lead solitary wave edge of the DSW via equation (\ref{e:Hs_blue}). Finally, the velocity of the lead solitary wave edge $V_{s}$ is obtained through the solitonic velocity expression (\ref{eq:vs_per}) established earlier. 

To validate the theoretical predictions, we numerically solve the conservative-eKdV equation (\ref{e:consekdv}) by discretizing the spatial domain with a Fourier spectral method and advancing in time with a fourth-order Runge–Kutta (RK4) scheme, following the procedure in \cite{trefethen}. The initial jump discontinuity (\ref{e:jump}) that generates the shock is implemented numerically using an initial tanh-well profile, as described in \cite{wwp}. Figures \ref{f:dsw_comps}(a) and \ref{f:dsw_comps}(b) display comparisons of the predicted and computed values of the lead solitary wave edge velocity and and height of the DSW across a wide range of initial jump magnitudes $\Delta$, respectively. The higher order coefficients $c_{1}$, $c_{3}$, and $c_{4}$ are fixed to the shallow water wave values ($c_{1}=-3/2$, $c_{3}=5/2$, $c_{4}=19/40$) in these comparisons. It can be seen that the agreement between the theoretical and numerical results is excellent.

\subsection{Whitham shock problem}

We next employ the variational single solitary wave solutions to investigate the conservative-eKdV CDSW regime. This non-classical form of resonant DSW has been the subject of extensive study in the contexts of water waves and nonlinear optics \cite{salehnem1,salehnem2,salehekdv}. In the present work, we revisit this problem within the framework of water waves, employing the solutions (\ref{eq:ws_per}), (\ref{eq:vs_per}), and (\ref{e:Hs_blue}) derived above to demonstrate their effectiveness and analytical advantages.

Following \cite{salehnem1,salehnem2,salehekdv}, the resonant wavetrain in front of the bore can be effectively approximated by a uniform Stokes wave oscillating on the background $\bar{u}_{r}$, while the bore itself is modelled as a train of nearly uniform solitary waves using the DSW equal amplitude approximation introduced in the previous subsection. Since the derivation of the CDSW solution parallels the methodology in \cite{salehekdv}, but now under the special conservation case $c_{2}=2c_{3}$ and incorporating the variational single solitary wave solutions, our focus here will be on highlighting the key steps and central results of the derivation.

The Stokes wave approximation of the resonant radiation propagating ahead of the bore is obtained by seeking a weakly nonlinear expansion in the resonant amplitude parameter $a_{r}$, namely, \cite{whitham}
\begin{equation}\label{e:stokes}
    u_{r} =  \bar{u}_{r}+a_{r}\cos(\theta_{r})+a^{2}_{r}u_{2}\cos(2\theta_{r})+\mathcal{O}(a^3_r), 
\end{equation}
with a corresponding weakly nonlinear expansion for the resonant frequency,
\begin{equation}
\omega_{r}=\omega_{0}+a_{r}\omega_{1}+a^{3}_{r}\omega_{2}+\mathcal{O}(a^3_r),  
 \label{e:stokes_gen_w}
\end{equation}
where $\theta_{r}=k_{r}x-\omega_{r}t$ denotes the resonant phase. Applying a standard asymptotic procedure to determine the unknown Stokes coefficients in the expansions (\ref{e:stokes}) and (\ref{e:stokes_gen_w}) by substituting into the conservative-eKdV equation \eqref{e:consekdv} yields
\begin{eqnarray}
& & \omega_{0} = (6\bar{u}_{r}+\epsilon c_{1}\bar{u}^{2}_{r})k_{r}-\left(1+\epsilon c_{3}\bar{u}_{r}\right)k^3_{r} + \epsilon c_{4} k^5_{r}, \label{e:omega0} \\
& &  \omega_{2} = \frac{36+24\epsilon c_1\bar{u}_{r} - \epsilon \left(36c_3-6c_1\right)k^{2}_{r}}{24k_{r} - \epsilon\left(120c_4k^3_{r}-24c_3\bar{u}_{r}k_{r}\right),}
\nonumber \\
& & \hspace{0.18in}= \frac{3}{2k_{r}}+\epsilon\left(\frac{1}{4}\left( c_{1}-6c_{3}+ 30 c_{4}\right)k_{r} + \left( c_{1} - \frac{3}{2}c_{3} \right) \frac{\bar{u}_{r}}{k_{r}}\right)+\mathcal{O}\left(\epsilon^2\right), \label{e:omega2}  \\
& &  u_{2} = \frac{6+2\epsilon c_{1}\bar{u}_{r}-3\epsilon c_{3} k^2_{r}}{12k^2_{r} + 12\epsilon\left(c_{3}\bar{u}_{r} - 5c_4k^2_{r}\right)k^{2}_{r}}  
\nonumber \\
& & \hspace{0.18in}=\frac{1}{2k^2_{r}} - \frac{1}{12}\epsilon\left( 3c_3 - 30 c_{4} - 2\left(c_{1} - 3c_{3} \right) \frac{\bar{u}_{r}}{k^2_{r}}\right) + \mathcal{O}\left(\epsilon^2\right), \label{e:u2}
\end{eqnarray}
with $\omega_{1}=0$, after removal of secular terms throughout the asymptotic procedure \cite{whitham,kamchatnovbook}. The Stokes wave must then be connected to the unstable bore behind via modulation jump conditions. This jump system is known as a Whitham shock system—a concept in dispersive hydrodynamics originally speculated by Whitham \cite{whitham} but not pursued, and later formally developed and pioneered by Hoefer \textit{et al.} \cite{patjump}.

To carry out the Whitham shock calculations, we impose moving jumps in the mean level $\bar{u}$, amplitude $a$, and wavenumber $k$ modulation variables,
\begin{equation}
   \bar{u}(x,t)= \left\{ \begin{array}{cc}
         \bar{u}_{s},~~x<U_{s}t\\
         u_{+},~~x>U_{s}t
    \end{array}\right., \quad 
    a(x,t)= \left\{ \begin{array}{cc}
         a_{s},~~x<U_{s}t\\
         a_{r},~~x>U_{s}t
    \end{array}\right., \\ 
    \quad k(x,t)= \left\{ \begin{array}{cc}
         0,~~x<U_{s}t\\
         k_{r},~~x>U_{s}t
    \end{array}\right. ,
\end{equation}
which can be viewed as the dispersive analogues of the classical Rankine–Hugoniot jump conditions. Here, $\bar{u}_{s}$ denotes the background of the lead solitary wave edge, and $U_{s}$ represents the Whitham shock velocity. In these modulation jump conditions, the level behind the jump in the wavenumber $k$ is set to zero, since the solitary wave corresponds to an infinite wavelength and thus has a wavenumber near zero \cite{whitham,kamchatnovbook}. Moreover, the level ahead in the jump for the mean $\bar{u}$ is approximated by $u_{+}$ (the initial level ahead of the shock), as numerical evidence shows that the mean of the resonant radiation is extremely close to $u_{+}$ \cite{pat,patjump,salehekdv}, rendering their difference negligible. This approximation also ensures that the number of wave parameter unknowns matches the number of equations that can be derived. Altogether, the system involves five unknown parameters: the background of the lead solitary wave edge $\bar{u}_{s}$, the solitary wave edge amplitude parameter $a_{s}$, the resonant amplitude parameter $a_{r}$, the resonant wavenumber $k_{r}$, and the Whitham shock velocity $U_{s}$.

The travelling discontinuities introduced above must now be imposed as shock conditions for the mass and energy conservation laws of the conservative-eKdV equation \eqref{e:ekdvmass} and \eqref{e:energycons}. This yields the Whitham modulation jump conditions
\begin{eqnarray}
    -U_{s}\left(\overline{\mathcal{D}}_{\text{m,bore}}-\overline{\mathcal{D}}_{\text{m,Stokes}}\right)+\left(\overline{\mathcal{F}}_{\text{m,bore}}-\overline{\mathcal{F}}_{\text{m,Stokes}}\right) & = &0,
    \label{e:wjump1} \\
    -U_{s}\left(\overline{\mathcal{D}}_{\text{e,bore}}-\overline{\mathcal{D}}_{\text{e,Stokes}}\right)+\left(\overline{\mathcal{F}}_{\text{e,bore}}-\overline{\mathcal{F}}_{\text{e,Stokes}}\right) & = &0,
    \label{e:wjump2}
\end{eqnarray}
where
\begin{eqnarray}
  \overline{\mathcal{D}}_{m} & = & \overline{u}, \label{e:dm} \\
  \overline{\mathcal{D}}_{e} & = & \frac{1}{2}\overline{u^{2}},
  \label{e:de} \\
  \overline{\mathcal{F}}_{m} & = & 3\overline{u^{2}} + \overline{u_{xx}} + \epsilon \left( \frac{1}{3} c_{1} \overline{u^{3}} + c_{3}\overline{uu_{xx}} + \frac{1}{2} c_{3} \overline{u_{x}^{2}} + c_{4} \overline{u_{xxxx}} \right), \label{e:fm} \\
  \overline{\mathcal{F}}_{e} & = & 2\overline{u^{3}} + \overline{uu_{xx}}
 - \frac{1}{2}\overline{u_{x}^{2}} + \epsilon \left( \frac{1}{4} c_{1} \overline{u^{4}} + c_{3} \overline{u^{2}u_{xx}} + c_{4} \overline{uu_{xxxx}} \right.
 \nonumber \\
 & & \left. \mbox{}- c_{4} \overline{u_{x}u_{xxx}} + \frac{1}{2} c_{4} \overline{u_{xx}^{2}} \right). \label{e:fe}
\end{eqnarray}
The symbols $\mathcal{D}$ and $\mathcal{F}$ denote the density and flux terms in the conservation laws, respectively, while the subscripts `m' and `e' correspond to the mass and energy equations. The averaging rules used in \eqref{e:wjump1} and \eqref{e:wjump2} are specified by
\begin{equation}
    \overline{f}_{\text{Stokes}}=\frac{1}{2\pi}\int^{2\pi}_{0} f(u,u_{\theta},u_{\theta\theta},\ldots) \,d\theta,
\end{equation}
when $u$ is taken as the Stokes wave solution $u=u_{r}$ \eqref{e:stokes}, and by
\begin{equation}
    \overline{f}_{\text{bore}}=\int^{\infty}_{-\infty} f(u,u_{\theta},u_{\theta\theta},\ldots) \,d\theta,
\end{equation}
when $u$ is the variational single solitary wave solution (\ref{e:ansatz}) with the wave parameter expressions \eqref{eq:ws_per}, \eqref{eq:vs_per}, and \eqref{e:Hs_blue}. In this setting, however, the background associated with the solitary wave must be explicitly included in the solution representation, which takes the form
\begin{equation}
u_{s} = \bar{u}_{s} + A_{s}\sech^2(w_{s}\,\theta_{s}).
\end{equation}
The role of the background level at the lead solitary wave edge is particularly important in the Whitham jump calculations. Physically, the resonant radiation gradually lifts the initial upstream state $u_{+}$ to the mean level of the resonant wavetrain $\bar{u}_{r}$. This mean value then undergoes a discontinuous transition through the Whitham shock to the bore’s mean value $\bar{u}_{s}$, thereby providing the link between the resonant wavetrain and the unstable undular bore. A detailed theoretical and numerical justification of this mechanism is given in \cite{salehekdv}.

The Whitham modulation jump conditions \eqref{e:wjump1}–\eqref{e:wjump2} enable the determination of two unknown wave parameters. To close the system, three additional equations are required. The third equation follows from applying the DSW equal amplitude approximation to the conservative-eKdV bore, as in the earlier application, since the bore can be effectively approximated by a train of nearly equal amplitude solitary waves. This procedure again produces an implicit relation for the amplitude parameter $a_{s}$, but here it must be modified to account for the resonant radiation ahead of the bore, modelled by the Stokes wave \eqref{e:stokes}, together with the background level of the leading solitary wave $\bar{u}_{s}$. These extensions yield the implicit equation
\begin{equation}
     \frac{\bar{u}_{s}+2\Psi_{1}}{\dfrac{1}{2}\bar{u}_{s}^{2}+2\bar{u}_{s} \Psi_{1} +\dfrac{2}{3}\Psi_{2}} \\ 
     =\frac{\left(3\Delta^{2}+\dfrac{1}{3}\epsilon c_{1}\Delta^{3}\right)-\mathcal{F}_{\text{m,Stokes}}}{\left(2\Delta^{3}+\dfrac{1}{4}\epsilon c_{1}\Delta^{4}\right)-\mathcal{F}_{\text{e,Stokes}}},
\end{equation}
with
\begin{eqnarray}
\Psi_{1} & = & \sqrt{2a_{s}}+\frac{1}{2}\epsilon \sqrt{2a^{3}_{s}}\left(-\frac{29}{210}c_{1}+\frac{23}{35}c_{3}-\frac{17}{7}c_{4}\right), \\
 \Psi_{2} & = & \sqrt{2a^{3}_{s}}+\frac{3}{2}\epsilon \sqrt{2a^{5}_{s}}\left(-\frac{2}{45}c_{1}+\frac{1}{5}c_{3}-\frac{2}{3}c_{4}\right). 
\end{eqnarray}
As the bore in the CDSW regime is approximated by a nearly uniform solitary wavetrain, the fourth equation is provided by the velocity of the variational solitary wave solution (\ref{eq:vs_per}), which here serves as the Whitham shock velocity, namely, 
\begin{equation}
    U_{s}=V_{s}=2a_{s}+\epsilon a^{2}_{s}\left(\frac{6}{35}c_{1} - \dfrac{2}{35} c_3-\frac{4}{7}c_{4}\right).
\end{equation}
Finally, the system is closed with the resonance condition: the leading solitary wave edge of the bore must propagate at the phase velocity of the resonant radiation, that is, 
\begin{equation}
    U_{s}=\frac{\omega_{r}}{k_{r}}=\frac{\omega_{0}+a^{2}_{r}\omega_{2}}{k_{r}},
\end{equation}
This condition is equivalent to the modulation jump relation associated with the wave conservation equation $k_{t}+\omega_{x}=0$ \cite{wwp}.

To validate the theory, we compare with direct numerical simulations of the conservative-eKdV equation in the CDSW regime. Two sets of runs are performed: (i) fixing $c_{1}$ and $c_{3}$ while varying $c_{4}$, and (ii) fixing $c_{4}$ while varying $c_{1}$ and $c_{3}$. For simplicity, we always set $c_{1}=c_{3}$, and $c_{2}=2c_{3}$ as required by energy conservation.
\begin{figure}[!ht]
    \centering
    \includegraphics[width=0.49\textwidth]{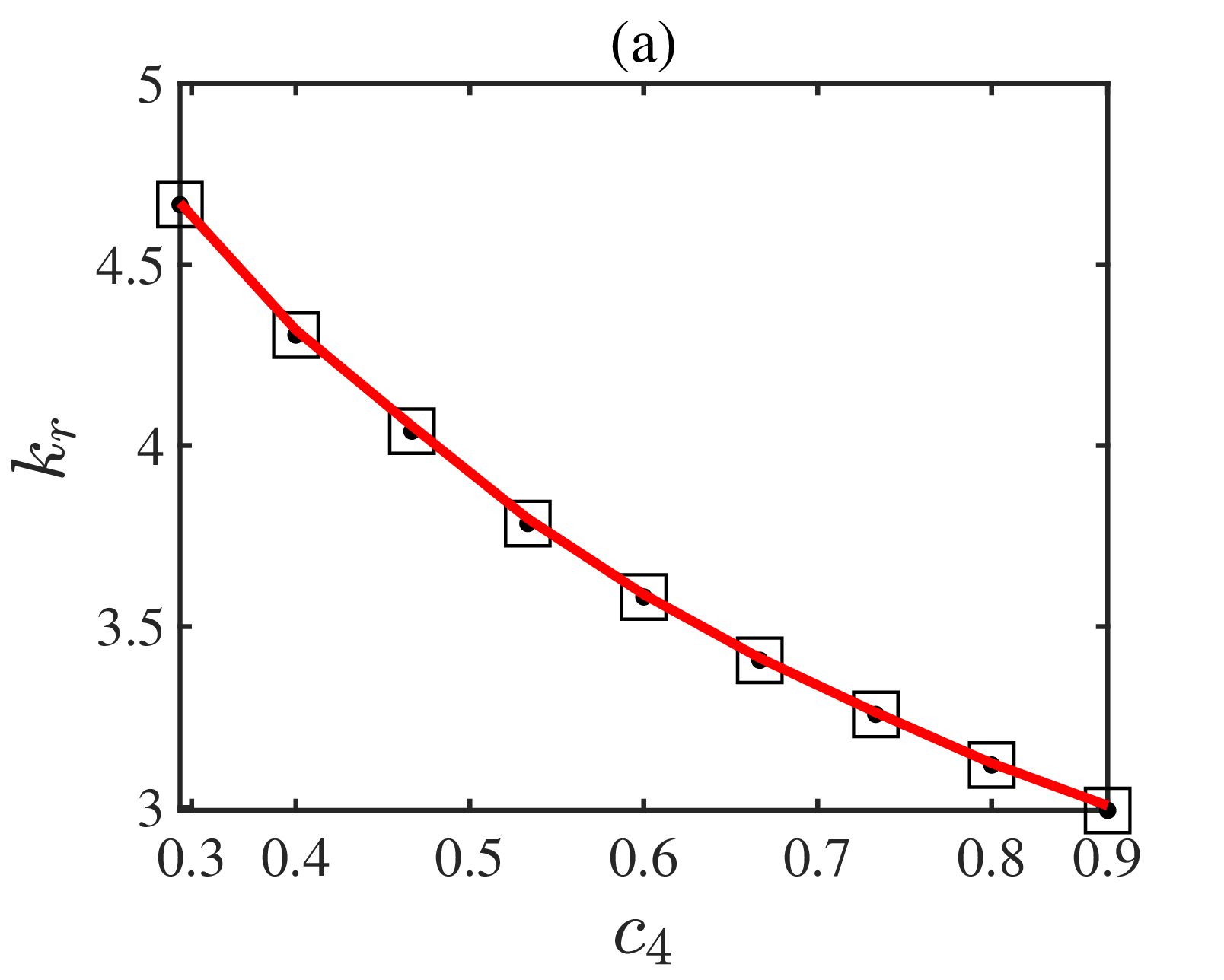}
    \includegraphics[width=0.49\textwidth]{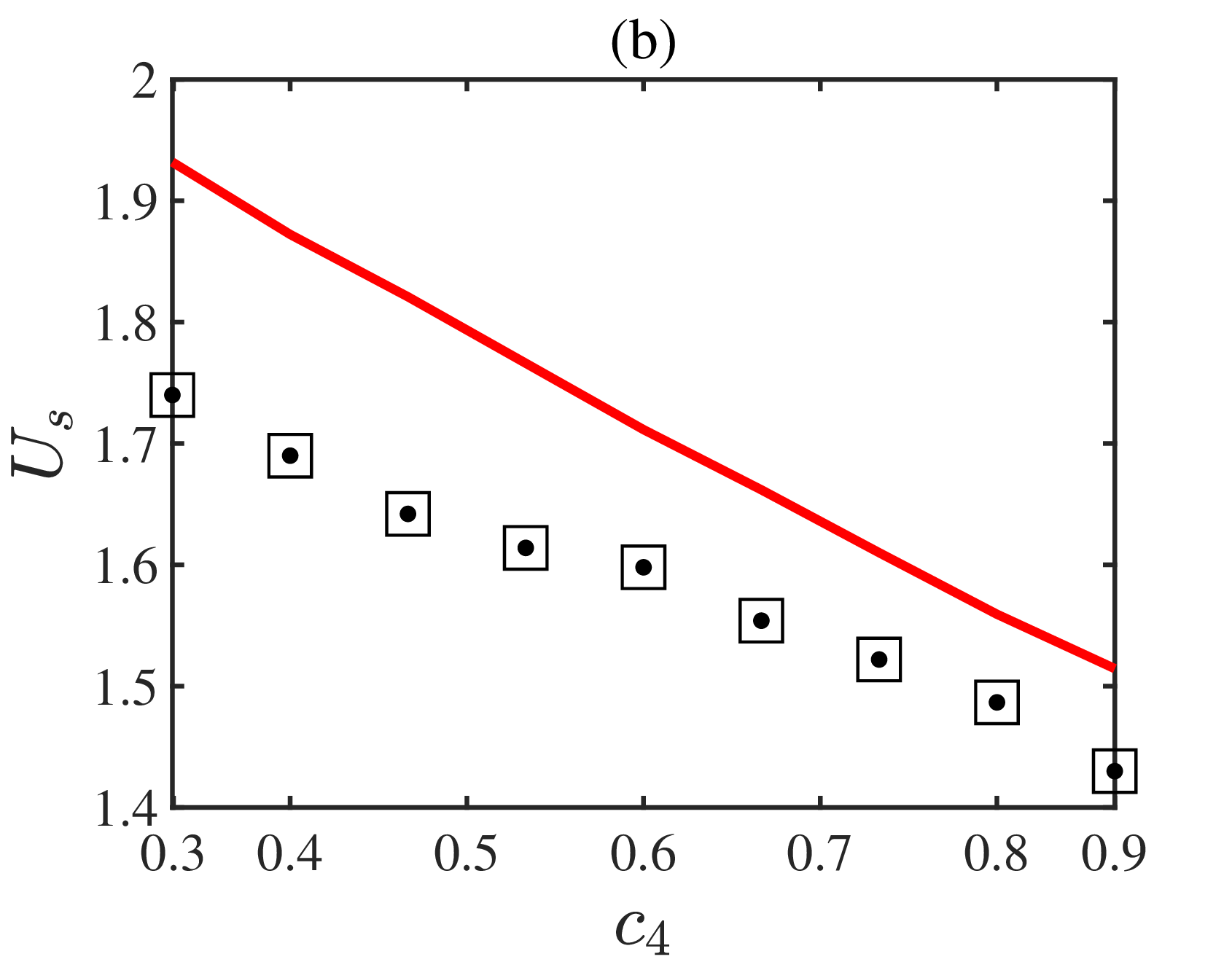}
    \includegraphics[width=0.49\textwidth]{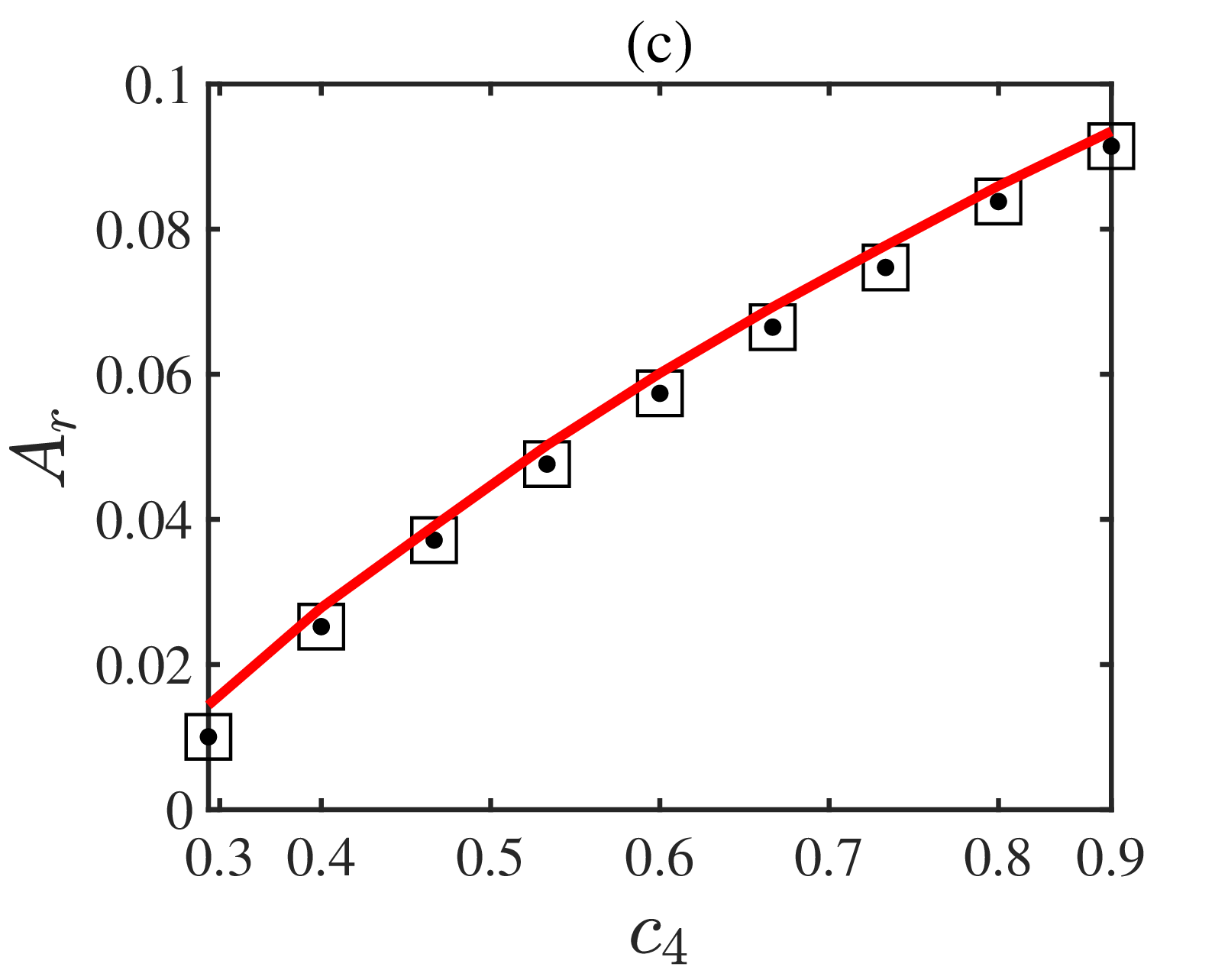}
    \includegraphics[width=0.49\textwidth]{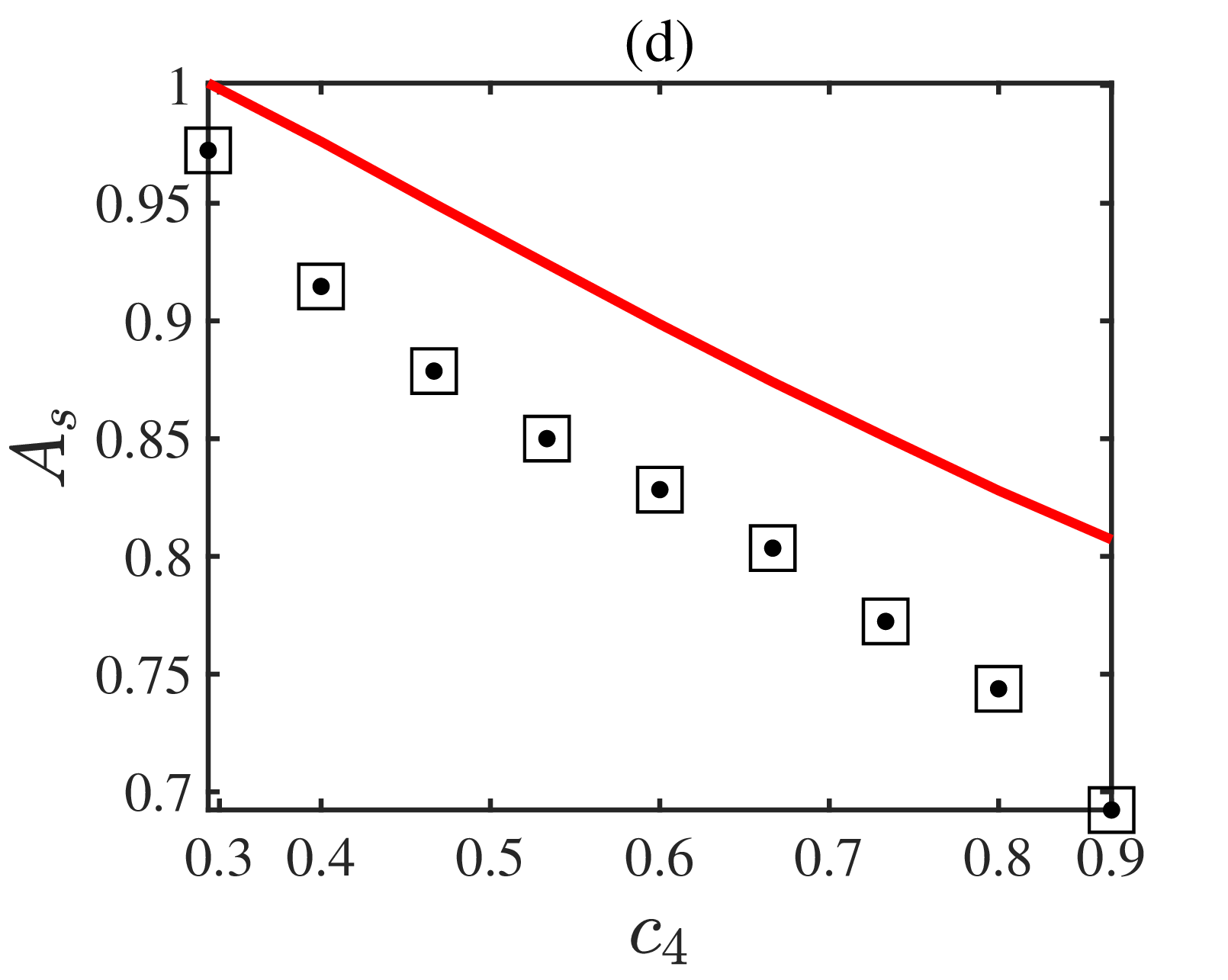}
    \caption{Comparisons between numerical solutions of the conservative-eKdV equation (\ref{e:consekdv}) and analytical solutions obtained by the Whitham shock method for the CDSW regime. Numerical solutions: black dot-boxes; theoretical solutions: red (solid) line. (a) resonant wavenumber of radiation $k_{r}$, (b) velocity of Whitham shock $U_{s}$, (c) resonant height of radiation $A_{r}$, (d) lead solitary wave edge height of the bore $A_{s}$. Here, $c_{1}=c_{3}=0.5$, $\Delta=0.5$, $u_{+}=0$, $t=20$, $\epsilon=0.15$. (Color version online).}
     \label{f:cdsw_varyc4}
\end{figure}
Figures \ref{f:cdsw_varyc4}(a) to \ref{f:cdsw_varyc4}(d) present comparisons between theoretical predictions and numerical results for the resonant wavenumber $k_{r}$, the Whitham shock velocity $U_{s}$, the resonant wave height $A_{r}$, and the lead solitary wave edge height $A_{s}$, respectively. In these runs, the coefficients are chosen as $c_{1}=c_{3}=0.5$ and $\epsilon=0.15$, while $c_{4}$ is varied. The resonant wavenumber $k_{r}$ is extracted numerically from the mean resonant wavelength $\lambda_{r}$ averaged over 20--30 wave crests, via $k_{r}=2\pi/\lambda_{r}$. It is important to note that the resonant wave height $A_{r}$ differs from the resonant amplitude parameter $a_{r}$, since the actual height is determined from the maximum of the Stokes wave solution \eqref{e:stokes}, yielding $A_{r}=u_{+}+a_{r}+a_{r}^{2}u_{2}$, where $a_{r}$ is computed numerically from the Whitham shock system. Similarly, the lead solitary wave edge height is given by $A_{s}=a_{s}+\epsilon a^{2}_{s}\left(c_{1}/420 -c_3/35 + 3c_{4}/14\right)$, as derived from the variational single solitary wave solution in Section \ref{sec:var}, with $a_{s}$ again determined numerically from the Whitham system. Across all comparisons, the agreement between theory and numerics is excellent, with maximum errors of 19.95\% and 11.48\% in $U_{s}$ and $A_{s}$, respectively.

A second set of comparisons, shown in Figures \ref{f:cdsw_fixc4}(a) to \ref{f:cdsw_fixc4}(d), fixes $c_{4}$ at its shallow water wave value $c_{4}=19/40$ while varying $c_{1}=c_{3}$, again with $\epsilon=0.15$. As before, the theoretical predictions closely match the numerics, with maximum errors of 11.5\% and 9.07\% in the Whitham shock velocity $U_{s}$ and the lead solitary wave edge height $A_{s}$, respectively.

\begin{figure}[!ht]
    \centering
    \includegraphics[width=0.49\textwidth]{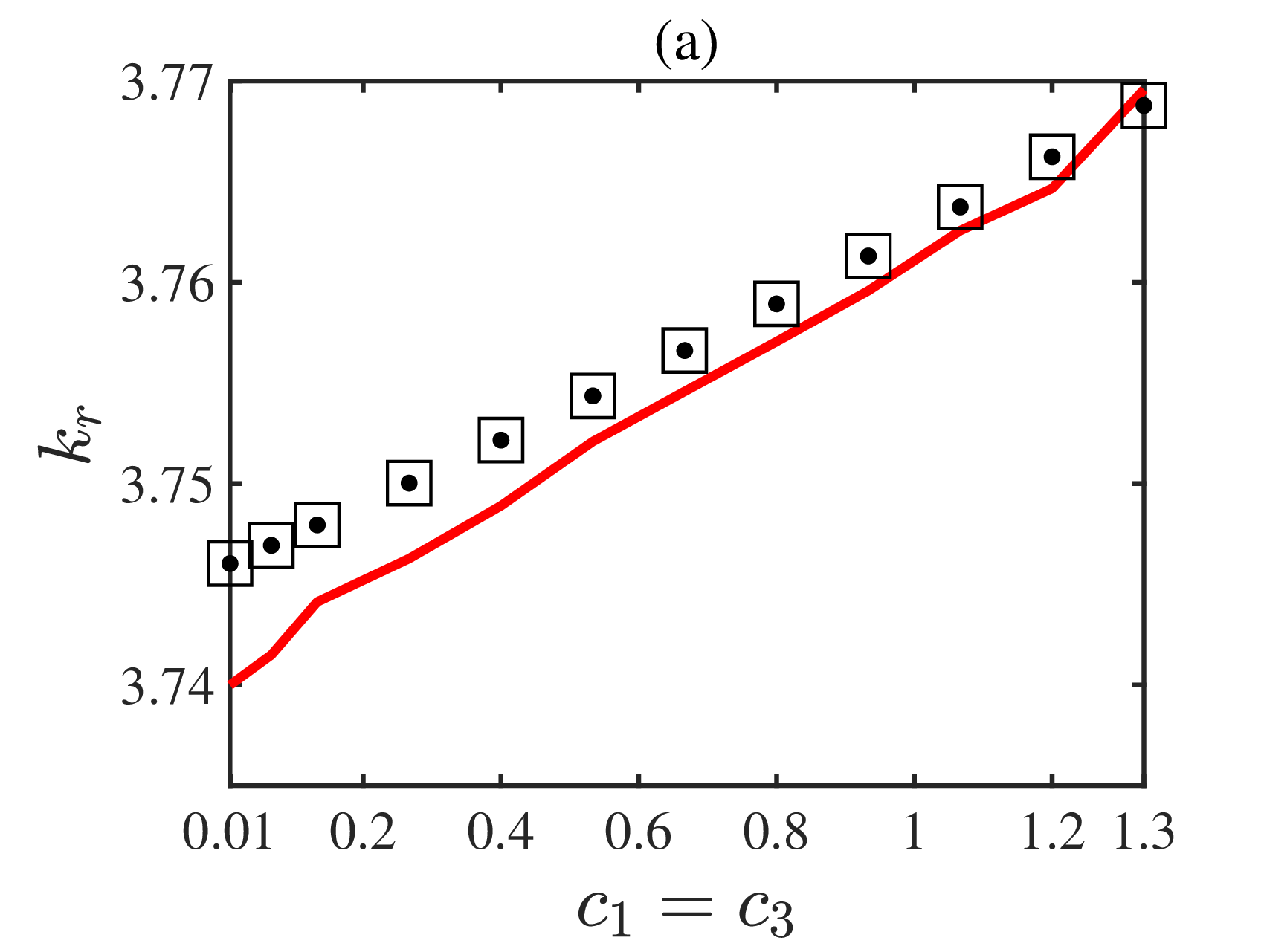}
    \includegraphics[width=0.49\textwidth]{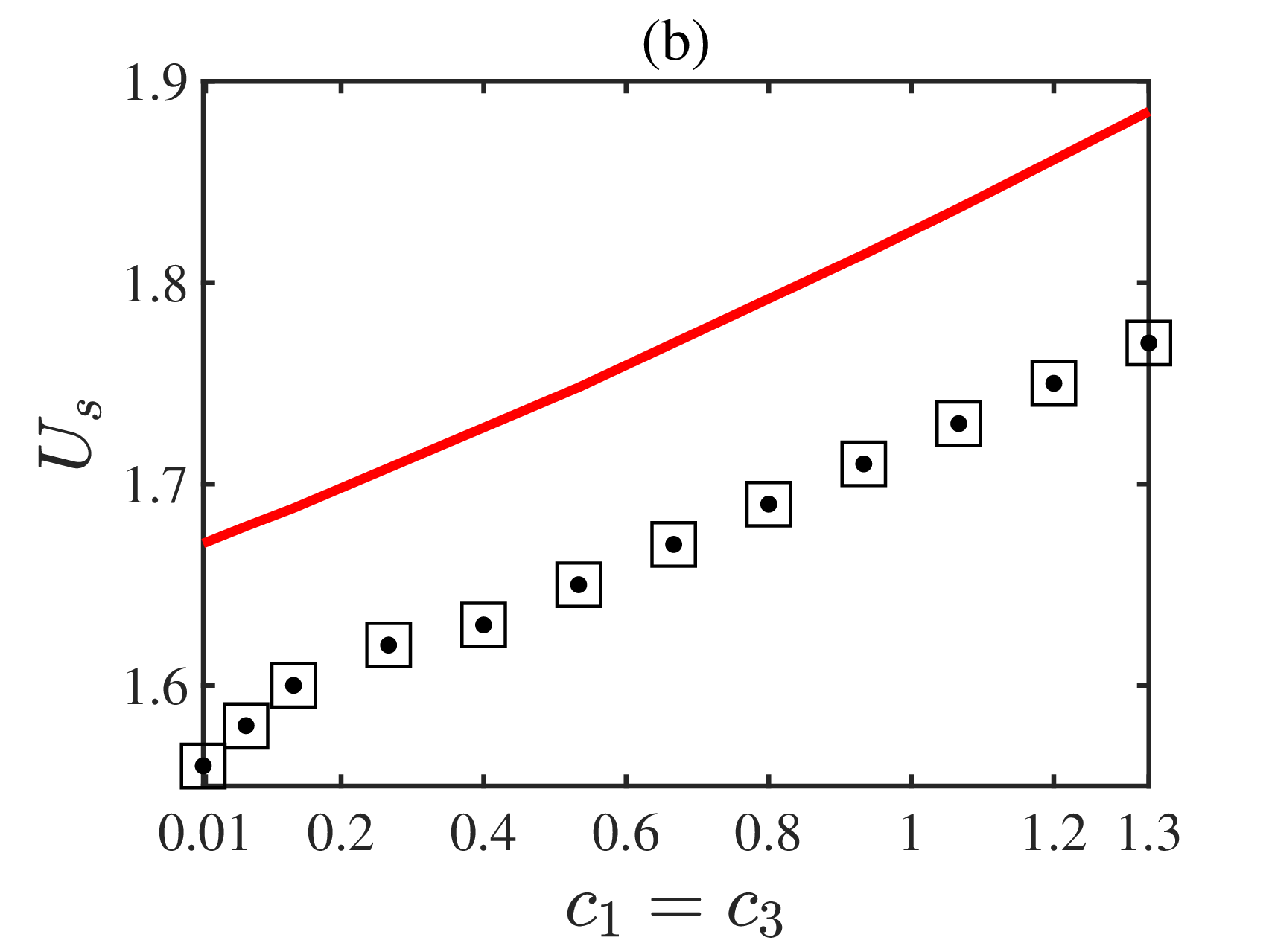}
    \includegraphics[width=0.49\textwidth]{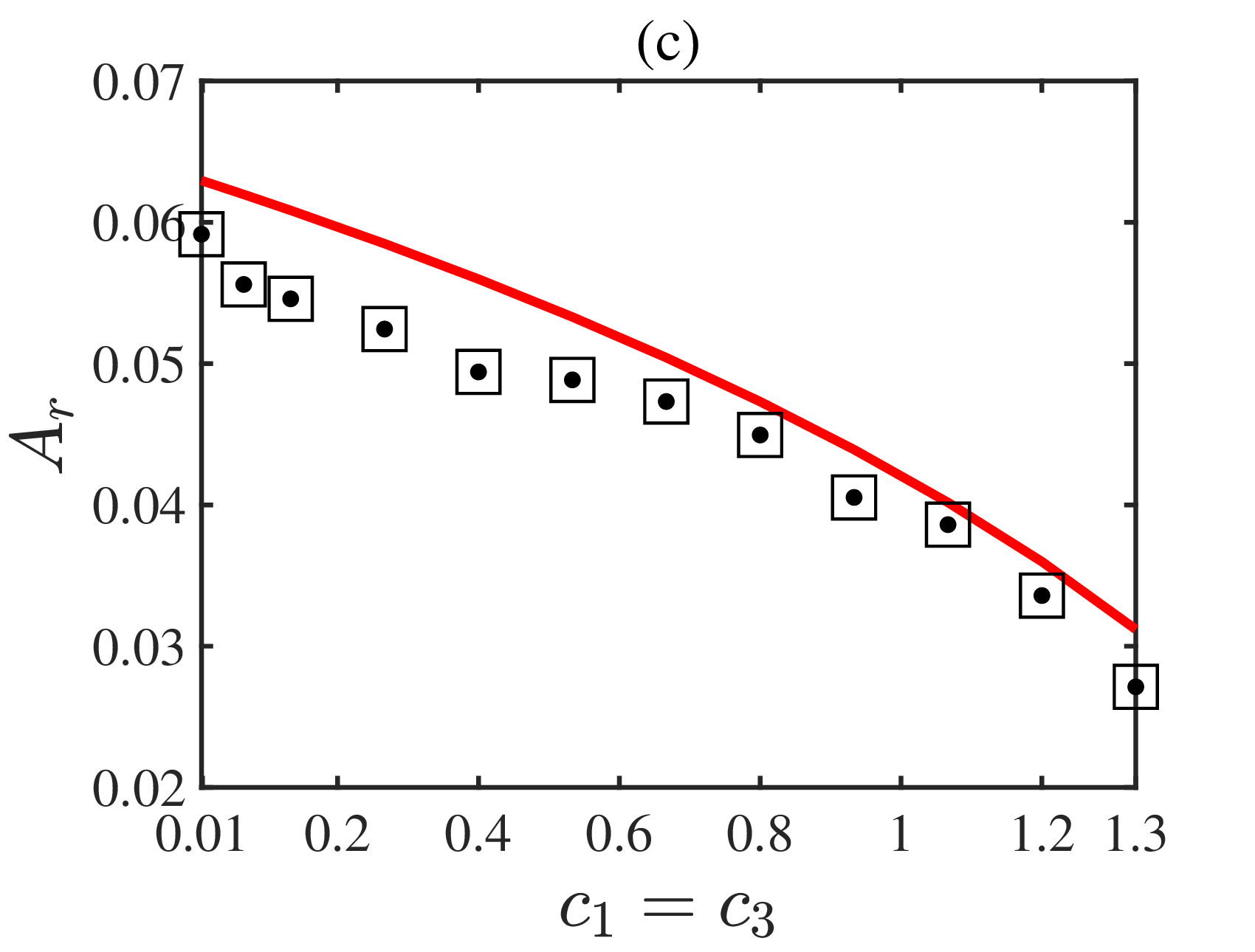}
    \includegraphics[width=0.49\textwidth]{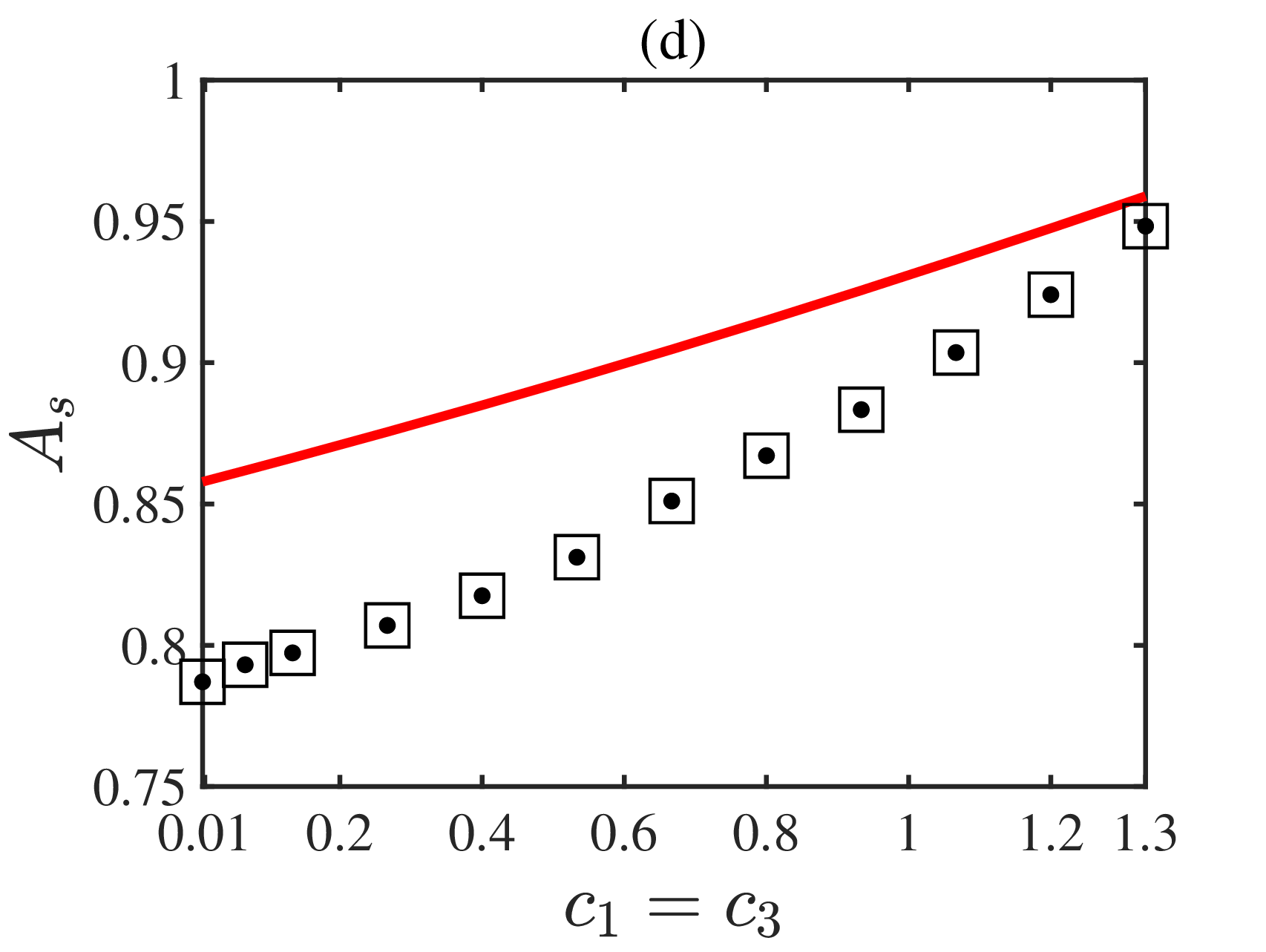}
    \caption{Comparisons between numerical solutions of the conservative-eKdV equation (\ref{e:consekdv}) and analytical solutions obtained by the Whitham shock method for the CDSW regime. Numerical solutions: black dot-boxes; theoretical solutions: red (solid) line. (a) resonant wavenumber of radiation $k_{r}$, (b) velocity of Whitham shock $U_{s}$, (c) resonant height of radiation $A_{r}$, (d) lead solitary wave edge height of the bore $A_{s}$. Here, $c_{4}=19/40$, $\Delta=0.5$, $u_{+}=0$ $t=20$, $\epsilon=0.15$. (Color version online).}
     \label{f:cdsw_fixc4}
\end{figure}

\section{Conclusion and future directions}\label{s:conc}

%In this work, we have established the existence of a variational single solitary wave solution for a particular class of the extended KdV (eKdV) equation \eqref{e:ekdv} in which energy conservation plays a central role, namely the case $c_{2}=2c_{3}$. It is shown that this model, referred to throughout as the conservative-eKdV equation (\ref{e:consekdv}), admits a single solitary wave solution in the form of $\sech^{2}$ profile using rigorous tools from the calculus of variations, specifically the averaged Lagrangian method. By formulating the associated Lagrangian and Hamiltonian structures, the variational framework yields explicit expressions, in terms of the solitonic amplitude parameter $a_{s}$, for the higher order corrections to the solitary wave velocity $V_{s}$ and (inverse) width $w_{s}$, and—when combined with basic energy estimates—for the solitary wave height $A_{s}$.

%Hamid's suggestion (bit of rephrasing) 
In this work, we have established the existence of a single solitary wave solution for a class of extended KdV (eKdV) equations \eqref{e:ekdv} that satisfy energy conservation, namely when $c_{2}=2c_{3}$. It is shown that this model, referred to throughout as the conservative-eKdV equation (\ref{e:consekdv}), admits a single solitary wave solution in the form of $\sech^{2}$ profile using tools from the calculus of variations, specifically the averaged Lagrangian method. By formulating the associated Lagrangian and Hamiltonian structures the variational framework yields explicit expressions, in terms of the solitonic amplitude parameter $a_{s}$, for the higher order corrections to the solitary wave velocity $V_{s}$ and (inverse) width $w_{s}$, and—when combined with basic energy estimates—for the solitary wave height $A_{s}$.

Compared to earlier studies of the eKdV equation that relied primarily on algebraic and asymptotic techniques, the variational solitonic solutions derived here are notably simpler in form and considerably more convenient for applications within Whitham modulation theory. This analytical tractability makes them particularly well suited for studying nonlinear wave phenomena in dispersive hydrodynamics.

To demonstrate the accuracy and effectiveness of the proposed solutions, two distinct applications were examined using tools from dispersive hydrodynamics, including the DSW equal amplitude approximation and the concept of Whitham shocks. The first application concerned the classical formation of dispersive shock waves governed by the conservative-eKdV equation \eqref{e:consekdv}. The second addressed a non-classical dispersive shock regime, referred to as the Cross-over DSW (CDSW), in which the leading solitary wave edge of the bore interacts resonantly with radiation propagating ahead of the bore, while the interior wavetrain of the bore becomes modulationally unstable. In both cases, detailed comparisons with direct numerical simulations of the conservative-eKdV equation show excellent agreement between theoretical predictions and numerical results.

Several open problems naturally arise from this study. One particularly challenging and interesting direction is the extension of the present analysis to the full eKdV equation in which energy conservation is no longer enforced, that is, when $c_{2}\neq 2c_{3}$. This regime presents significant analytical difficulties, as no direct Lagrangian or Hamiltonian structures governing the eKdV equation is currently available to support the construction of variational single solitary wave solutions. Nevertheless, this problem is the subject of ongoing investigation, and preliminary results indicate promising progress, particularly in the context of Whitham modulation theory and its applications to non-integrable dispersive systems.

\end{document}